\theoremstyle{plain}
\newtheorem{theorem}{Theorem}[section]
\newtheorem{proposition}[theorem]{Proposition}
\newcommand{\notiz}[1]{\relax}
\newcommand{\zitep}[1]{\relax}
\newcommand{\1}{\mathds 1}            
\newcommand{\Price}[1][]{
		\ifthenelse{\equal{#1}{}}{\mathit{Price}}{\Price{}^{#1}}
	} 
\newlength{\wordlength}
\newcommand{\ul}{\underline}
\newcommand{\ol}{\overline}
\newcommand{\RR}{\mathbb{R}}
\newcommand{\EE}{\mathbb{E}}
\newcommand{\dy}{\text{d}y}
\renewcommand{\cite}{\citet}
\numberwithin{equation}{section}
\numberwithin{figure}{section}
\numberwithin{table}{section}
\begin{document}
\title{Fast Calculation of Credit Exposures for Barrier and Bermudan options using Chebyshev interpolation}

\bigskip
\author{\textbf{Kathrin Glau$\vphantom{l}^{1,2}$,} \textbf{Ricardo Pachon$\vphantom{l}^{3}$,} \textbf{Christian P{\"o}tz$\vphantom{l}^{1}$
}
\\\\$\vphantom{l}^{\text{1}}$Queen Mary University of London, UK\\
$\vphantom{l}^{\text{2}}$Ecole polytechnique f\'ed\'erale de Lausanne, Switzerland\\
$\vphantom{l}^{\text{3}}$Credit Suisse, UK
}

\maketitle
\begin{abstract}
We introduce a new method to calculate the credit exposure of Bermudan, discretely monitored barrier and European options. Core of the approach is the application of the dynamic Chebyshev method of \cite{GlauMahlstedtPoetz2019}. The dynamic Chebyshev method delivers a closed form approximation of the option prices along the paths together with the options' delta and gamma. Key advantage is the polynomial structure of the approximation, which allows us a highly efficient evaluation of the credit exposures, even for a large number of simulated paths. The approach is highly flexible in the model choice, payoff profiles and asset classes.  We compute the exposure profiles for Bermudan and barrier options in three different equity models and compare them to the profiles of European options. The analysis reveals potential shortcomings of common simplifications in the exposure calculation. The proposed method is sufficiently simple and efficient to avoid such risk-bearing simplifications.

\end{abstract}

\textbf{Keywords}
	Option Pricing, Credit Exposure, Complexity Reduction, Polynomial Interpolation
	
\noindent\textbf{2010 MSC} 91G60, 41A10  

\section{Introduction}
The credit exposure resulting from two counterparties facing each other on a derivatives deal is the main input in a growing list of calculations, all crucial since the financial crisis of 2007--2008. Credit exposures are used to estimate, for example, counterparty credit risk (and consequently the regulatory capital of financial firms), initial margins of collateralized trades, Credit Valuation Adjustments (CVA), Debit Valuation Adjustments (DVA) and, more recently, Funding Valuation Adjustments (FVA).

The exposure of a trade at time $t$ is defined as
\[
E_t(X_t) =\max\{V_t(X_t),0\},
\]
where $X_t$ is the risk factor that drives the price $V_t$ at time $t$ of a portfolio of derivatives. In essence, the credit exposure calculation projects forward in time the distributions of relevant underlying assets, which follow appropriate stochastic models, and obtains the associated distributions of the values of the derivatives in scope, up to their longest maturity. The specifics of this calculation vary with each application. For example, for CVA and DVA the calculation is performed at netting set  while for FVA is done at portfolio level. For CVA, negative exposures are floored to zero before taking a discounted average, i.e., the expected future exposure of a trade used in CVA is calculated as
\begin{align}\label{eq:EE}
EE_t(X_t) = \mathbb{E}^{\mathbb{P}} [E_t(X_t)]
\end{align}
while for counterparty risk the potential future exposure (PFE) at a percentile of the exposure (typically 95th percentile) is usually calculated, i.e.,  for a given level $\alpha \in (0,1)$ it is defined as
\begin{align}\label{eq:PFE}
PFE_t^\alpha (X_t) =\inf \{y: \mathbb{P}(E_t(X_t)\leq y)\geq \alpha\}.
\end{align}

The mentioned distributions are usually obtained through Monte Carlo simulation: On some chosen time points, the derivatives are re-evaluated on various scenarios, randomly drawn from the distribution of the underlying asset, and from the resulting distribution the required metric is extracted. The crux of the calculation is the repeatedly call of the pricers which, for exotic trades, can be computationally expensive. See Gregory (2010) for an overview of credit exposure and its calculation. 

In this paper we introduce a numerical technique based on Chebyshev interpolation for the fast calculation of credit exposures of barrier and Bermudan options, which due to their path-dependant nature make their computation expensive. Especially, when using the simple but naive approach of calling Monte Carlo simulations within a Monte Carlo simulation. In the literature regression based methods are studied in order to avoid nested Monte Carlo simulation, see for instance \cite{Schoeftner2008}, who calculate the exposure and CVA for derivatives without analytic solution (e.g. Bermudan options) based on a modification of the Least-Squares Monte Carlo approach of \cite{LongstaffSchwartz2001}. Another approach is investigated in \cite{ShenWeideAnderluh2013}, who calculate the exposure for Bermudan options one one asset, based on the COS method for early-exercise options of \cite{FangOosterlee2009}.

The study of Chebyshev interpolation belongs to the field of Approximation Theory, a well established branch of mathematics, and the results that provide a framework for its efficiency span over a hundred years. Being a tool with such a long history we anticipate the reader to be aware of it, however we believe that its many advantages for practical applications have been overlooked until recently, even in the specialised community.  See \cite{Trefethen2013} for an overview on Chebyshev interpolation and Approximation Theory.

In the computational finance literature we have not been able to locate more than handful of references that exploit the beneficial properties of Chebyshev interpolation or Chebyshev series. In this paper we aim to help closing this gap. We argue that Chebyshev interpolation can assist in the calculation of credit exposures for path-dependant products by producing an approximation to the pricer with some outstanding qualities: it can be quickly constructed from a few evaluations on a (non-adaptive) grid of asset values; it is robust and efficient to evaluate; and its accuracy can be tuned even for high orders.

The structure of this paper is as follows. In section 2 we present the basics of Chebyshev interpolation. Section 3 introduces the numerical technique for the calculation of exposures based on the dynamic Chebyshev algorithm of \cite{GlauMahlstedtPoetz2019} and  numerical experiments are presented in section 4. In section 5 we discuss the behaviour of the credit exposure profiles for barrier and Bermudan options and close the paper with conclusion in section 6.\\

\section{Chebyshev interpolation}
The one-dimensional Chebyshev interpolation is a polynomial interpolation of a function $f$ in the interval $[-1,1]$ of degree $N$ in the $N+1$ Chebyshev points $z_{k}=\cos(\pi k/N)$. These points are not equidistantly distributed but cluster at $-1$ and $1$. The interpolant can be written as a sum Chebyshev polynomials $T_{j}(z)=cos(j\,\text{acos}(z))$ with an explicit formulas for the coefficients, i.e. for a function $f:[-1,1]\rightarrow\RR$ we obtain
\begin{align*}
I_{N}(f)(z)=\sum_{j=0}^{N}c_{j}T_{j}(z) \quad \text{with} \quad c_{j}=\frac{2^{\1_{\{0<j<N\}}}}{N}\sum_{k=0}^{N}{}^{''}f(z_{k})T_{j}(z_{k})
\end{align*}
where $\sum{}^{''}$ indicates the summand is multiplied by $1/2$ if $k=0$ or $k=N$. In order to evaluate the interpolation efficiently one can exploit the following alternative definition of the Chebyshev polynomials
\begin{align}\label{Chebpoly_recur}
T_{n+1}(z)=2zT_{n}(z)-T_{n-1}(z), \qquad T_{1}(z)=z \quad \text{and} \quad T_{0}(z)=1.
\end{align}
Based on this recurrence relation Clenshaw's algorithm provides an efficient framework to evaluate the Chebyshev interpolant $I_{N}(f)$
\begin{align*}
&b_{k}(x)=c_{k}+ 2xb_{k+1}(x)-b_{k+2}(x),\quad\text{for}\quad k=n,\ldots,1\\
&I_{N}(f)(x)=c_{0}+xb_{1}(x)-b_{2}(x)
\end{align*}
with starting values $b_{n+1}(x)=b_{n+2}(x)=0$.

In order to interpolate functions on an arbitrary rectangular $\mathcal{X}=[\underline{x},\overline{x}]$, we introduce a transformation $\tau_{\mathcal{X}}:[-1,1]\rightarrow\mathcal{X}$ defined by
\begin{align}
\tau_{\mathcal{X}}(z)=\overline{x}+0.5(\underline{x}-\overline{x})(1-z).\label{Transformation}
\end{align}
The Chebyshev interpolation of a function $f:\mathcal{X}\rightarrow\mathbb{R}$ can be written as
\begin{align}\label{Cheby_Interpolation}
I_{\overline{N}}(f)(x)=\sum_{j=0}^{N}c_{j}p_{j}(x) \quad \text{with}\quad c_j&=\frac{2^{\1_{\{0<j<N\}}}}{N_i}\sum_{k=0}^{N}{}^{''}f(x_k)T_{j}(z_k)
\end{align}
for $x\in\mathcal{X}$ with transformed Chebyshev polynomials $p_j(x)=T_j(\tau^{-1}_{\mathcal{X}}(x))1_{\mathcal{X}}(x)$ and transformed Chebyshev points $x_k=\tau_{\mathcal{X}}(z_k)$.
The one-dimensional interpolation has a tensor based extension to the multivariate case, see e.g. \cite{SauterSchwab2010}.\\

The Chebyshev interpolation provides promising convergence results and explicit error bounds. The interpolation converges for all Lipschitz continuous functions and for analytic functions the interpolation converges exponentially fast. See \cite{Trefethen2013} for the one-dimensional case and for a multivariate version \cite{SauterSchwab2010}. Moreover, the convergence is of polynomial order for differentiable functions and the derivatives converge as well, see \cite{GassGlauMahlstedtMair2018}. 

The Chebyshev interpolation is implemented in the open-source $\textit{MatLab}$ package $\textit{chebfun}$ available at $\textit{www.chebfun.org}$. We use this package in some of our numerical experiments. 

\section{A unified approach for exposure calculation}
In this section we investigate the exposure calculation for different types of options such as European, Bermudan and barrier options. We propose a unified approach for all three types of options based on the Dynamic Chebyshev algorithm of \cite{GlauMahlstedtPoetz2019}. The core idea is to write the option price as a solution of a Dynamic Programming problem and to approximate the solution with Chebyshev polynomials.\\

\subsection{The Dynamic Chebyshev approach for exposure calculation}
For many (portfolios of) derivatives the expected exposure as defined in \eqref{eq:EE} cannot be calculated analytically and simulation approaches come into play. The risk factors $X_{t}^{i}$, $i=1,\ldots,M$ are simulated and the expected exposure is approximated by
\begin{align*}
EE_{t}(x)=\EE^{\mathbb{P}}[\max\{V_{t}(X_{t}),0\}]\approx\frac{1}{M}\sum_{i=1}^{M}\max\{V_{t}(X_{t}^{i}),0\}.
\end{align*}
Hence, the values $V_{t}(X_{t}^{i})$ of the derivative have to be calculated for a large number $M$ of simulated risk factors. Typically, there is no analytic solution available and the evaluation becomes computationally demanding. This is especially the case when the value function $V_{t}$ at time point $t$ depends on the conditional expectation of the value function at $t+1$. 

In order to address this issue we propose to approximate the function $x\mapsto V_{t}(x)$ with a polynomial. More precisely, we will approximate the value function with a weighted sum of Chebyshev polynomials, i.e.
\begin{align*}
V_{t}(x)\approx\widehat{V}_{t}(x)=\sum_{j=0}^{N}c_{j}p_{j}(x)
\end{align*}
with weights/coefficients $c_{j}$. Then we replace the value function with its Chebyshev approximation in the exposure calculation
\begin{align}\label{eq:EE_calc_Cheb_approx}
EE_{t}(x)=\EE^{\mathbb{P}}[\max\{V_{t}(X_{t}),0\}]\approx\frac{1}{M}\sum_{i=1}^{M}\max\{\widehat{V}_{t}(X_{t}^{i}),0\}.
\end{align}
Even for a large number of simulated risk factors the sum of polynomials can be evaluated efficiently. The remaining question is how to obtain the coefficients $c_{j}$ in every time step. Fortunately, when using the Chebyshev interpolation for approximation we have an explicit formula for the coefficients, which is a linear transformation of the function values at the Chebyshev points. The crucial point is thus the efficient calculation of the function values at the set of nodal points. Here, the Dynamic Chebyshev algorithm comes into play. The Dynamic Chebyshev method was presented in \cite{GlauMahlstedtPoetz2019} as a pricing method and can be very easily extended to calculate expected exposures of options.\\

In order to introduce the algorithm, we start with the pricing of a Bermudan option. The value of a Bermudan option with payoff $g$ and exercise dates $t_{0},\ldots,t_{n}=T$ is given by the optimal stopping problem
\begin{align*}
V_{t_{0}}(x)=\sup_{t_{0}\leq t_{u}\leq T}\EE^{\mathbb{Q}}[g(X_{t_{u}})\vert X_{t_{0}}=x].
\end{align*}
The principle of Dynamic Programming yields the backward induction
\begin{align*}
&V_{T}(x)=g(x)\\
&V_{t_{u}}(x)=\max\left\{g(x),\EE^{\mathbb{Q}}\left[V_{t_{u+1}}(X_{t_{u+1}})\vert X_{t_{u}}=x\right]\right\}.
\end{align*}
More generally, we can write the value function $V_{t_{u}}(x)$ as
\begin{align}\label{DPP_f}
V_{t_{u}}(x)=f\left(g(t_{u},x),\EE^{\mathbb{Q}}\left[V_{t_{u+1}}(X_{t_{u+1}})\vert X_{t_{u}}=x\right]\right)
\end{align}
for a Lipschitz continuous function $f:\RR\times\RR\rightarrow\RR$ and a function $g:[0,T]\times\RR\rightarrow\RR$ with $g(T,x)=g(x)$. This formulation includes also the pricing of European and barrier options, as we will see later.\\

Assume we have at $t_{u+1}$ an approximation $\widehat{V}_{t_{u+1}}$ with $V_{t_{u+1}}(x)\approx\widehat{V}_{t_{u+1}}(x)=\sum_{j}c_{j}(t_{u+1})p_{j}(x)$. In order to interpolate the value function $V_{t_{u}}$ at $t_{u}$ we have to calculate the values at the Chebyshev nodes $x_{k}$, $k=0,\ldots,N$. In this case the backward induction becomes
\begin{align*}
V_{t_{u}}(x_{k})&=f\left(g(t_{u},x_{k}),\EE^{\mathbb{Q}}\left[V_{t_{u+1}}(X_{t_{u+1}})\vert X_{t_{u}}=x_{k}\right]\right)\\
&\approx f\Big(g(t_{u},x_{k}),\EE^{\mathbb{Q}}\Big[\sum_{j=0}^{N}c_{j}(t_{u+1})p_{j}(X_{t_{u+1}})\vert X_{t_{u}}=x_{k}\Big]\Big)\\
&=f\Big(g(t_{u},x_{k}),\sum_{j=0}^{N}c_{j}(t_{u+1})\EE^{\mathbb{Q}}\left[p_{j}(X_{t_{u+1}})\vert X_{t_{u}}=x_{k}\right]\Big),
\end{align*}
where we exploited the linearity of the conditional expectation. Here, we see that the coefficients $c_{j}$ carry the information of the payoff, and the conditional expectations $\EE^{\mathbb{Q}}\left[p_{j}(X_{t_{u+1}})\vert X_{t_{u}}=x_{k}\right]$ carry the information of the stochastic process. Since the conditional expectations are independent of the backward induction they can be pre-computed in an offline step before the actual pricing. Given values $V_{t_{u}}(x_{k})$, $k=0,\ldots,N$ the Chebyshev coefficients are given by
\begin{align*}
c_{j}(t_{u})=\frac{2^{1_{0<j<N}}}{N}\sum_{k=0}^{N}{}^{''}V_{t_{u}}(x_{k})T_{j}(z_{k}),
\end{align*}
and we obtain a closed form approximation of the option price
\begin{align*}
V_{t_{u}}(x)\approx\widehat{V}_{t_{u}}(x)=\sum_{j=0}^{N}c_{j}(t_{u})p_{j}(x).
\end{align*}
The presented procedure is a pricing method for a large class of option pricing problems which can be written in the form of \eqref{DPP_f}. This includes different option types, payoff profiles as well as different asset classes and models. Note that we presented the framework for an option on one underlying. In case of multiple underlyings we only need to replace the one-dimensional Chebyshev interpolation with its multivariate extension. The more general multivariate version of the algorithm is presented in \cite{GlauMahlstedtPoetz2019}.\\

Three examples of options that can be written as Dynamic Programming problem in the form of \eqref{DPP_f} are early-exercise options (Bermudan options), classical European options and barrier options.\\

\textbf{Bermudan options:}\\
In this case the value function is given as
\begin{align*}
f\left(g(x),\EE^{\mathbb{Q}}\left[V_{t_{u+1}}(X_{t_{u+1}})\vert X_{t_{u}}=x\right]\right)=\max\left\{g(x),\EE^{\mathbb{Q}}\left[V_{t_{u+1}}(X_{t_{u+1}})\vert X_{t_{u}}=x\right]\right\}.
\end{align*}

\textbf{European options:}\\
European options correspond to Bermudan options with no early exercise. In this case the value function becomes
\begin{align*}
f\left(g(x),\EE^{\mathbb{Q}}\left[V_{t_{u+1}}(X_{t_{u+1}})\vert X_{t_{u}}=x\right]\right)=\EE^{\mathbb{Q}}\left[V_{t_{u+1}}(X_{t_{u+1}})\vert X_{t_{u}}=x\right].
\end{align*}

\textbf{Barrier options:}\\
Discretely monitored up-and-out barrier option with barrier $B$ can be written in the same form with value function
\begin{align*}
f\left(g(x),\EE^{\mathbb{Q}}\left[V_{t_{u+1}}(X_{t_{u+1}})\vert X_{t_{u}}=x\right]\right)=\EE^{\mathbb{Q}}\left[V_{t_{u+1}}(X_{t_{u+1}})\vert X_{t_{u}}=x\right]\1_{x\leq B}.
\end{align*}
Similarly, we can use the framework for down-and-out barrier options.\\

The resulting pricing algorithm is for all three problems essentially the same. However, the efficiency of the method is directly related to the smoothness of the value function. As a result the number of nodal points required for a given accuracy varies, compare Section 5.2 and 5.3 in \cite{GlauMahlstedtPoetz2019}.\\

Now, we are in a position to efficiently evaluate the exposure in formula \eqref{eq:EE_calc_Cheb_approx}. Assume we have simulated $M$ paths of the underlying risk factor. Then we price the option along the paths using the closed form approximation in terms of Chebyshev polynomials
\begin{align*}
V_{t_{u}}(X_{t_{u}}^{i})\approx\sum_{j=0}^{N}c_{j}(t_{u})p_{j}(X_{t_{u}}^{i}) \quad \text{for} \quad i=1,\ldots,M \ \text{ and } \ u=0,\ldots,n.
\end{align*}
These values can now be used to calculate the expected exposure or the potential future exposure for a given level $\alpha$. In the case of a Bermudan option one has to take into account that by exercising the option at $t_{u}$ the exposure becomes zero. Similarly, if the barrier option is knocked out the exposure at all future time steps is zero. These two effects yield a decreasing exposure for both types of options. The exposure calculation is described in Algorithm 1 for Bermudan options, in Algorithm 2 for European options and in Algorithm 3 for barrier options. Here, we assume that the days used for the exposure calculation are the exercise days of the option. An adoption of the algorithm for other exposure calculation days is straightforward.\\

\textbf{Algorithm 1: Bermudan options}\\
This algorithm provides a framework to calculate the expected exposure and the potential future exposure for a Bermudan option.
\begin{itemize}
\item[1.] Simulate paths $X_{t_{0}}^{i},\ldots,X_{t_{n}}^{i}$, $i=1,\ldots,M$ under the real world measure $\mathbb{P}$.
\item[2.] Find a suitable domain $\mathcal{X}=[\ul{x},\ol{x}]$ and define nodal points $x_{k}$, $k=0,\ldots,N$.
\item[3.] Pre-compute conditional expectations under the pricing measure $\mathbb{Q}$
\[\Gamma_{k,j}=\EE^{\mathbb{Q}}\left[p_{j}(X_{\Delta t})\vert X_{0}=x_{k}\right].\]
\item[4.] Start pricing at $T$: Compute nodal values $\widehat{V}_{T}(x_{k})=g(x_{k})$ for all $k=0,\ldots,N$ and calculate Chebyshev coefficients $c_{j}(T)$. For all paths compute the exposure $E_{T}^{i}=\max\{g(X_{T}^{i}),0\}$.
\item[5.] Iterative time stepping $t_{u+1}\rightarrow t_{u}$: Assume we have a Chebyshev approximation $V_{t_{u+1}}(x)\approx\widehat{V}_{t_{u+1}}(x)=\sum_{j}c_{j}(t_{u+1})p_{j}(x)$
\begin{itemize}
\item compute nodal values $\widehat{V}_{t_{u}}(x_{k})=\max\{g(x_k),\sum_{j=0}^{N}c_{j}(t_{u+1})\Gamma_{k,j}\}$ and new coefficients $c_{j}(t_u)$,
\item price the option for all simulation paths $V_{t_{u}}^{i}=\widehat{V}_{t_u}(X_{t_{u}}^{i})=\sum_{j\in J}c_{j}(t_{u})p_{j}(X_{t_{u}}^{i})$,
\item calculate exposure $E_{t_{u}}^{i}=\max\{V_{t_{u}}^{i},0\}$,
\item if the option is exercised (i.e $V_{t_{u}}^{i}=g(X_{t_{u}}^{i})$), update the exposure at all future time steps on this path $E_{t_{j}}^{i}$, $j=u+1,\ldots,n$.
\end{itemize}
\item[6.] Obtain an approximation of the option price at $t_{0}$ 
\[\widehat{V}_{t_{0}}(x)=\sum_{j\in J}c_{j}(t_{0})p_{j}(x),\]
an approximation of the expected future exposures
\[EE_{t_{u}}=\EE^{\mathbb{P}}\left[\max\{V_{t_u},0\}\right]\approx\frac{1}{M}\sum_{i=1}^{M}E_{t_{u}}^{i}\quad \text{for all} \quad u=0,\ldots,n,\]
and an approximation of the potential future exposures
\[PFE_{t_{u}}^{\alpha}(x)=\inf\big\{y:\mathbb{P}\big(E_{t}(x)\leq y\big)\geq \alpha\big\}\approx\inf\big\{y:\frac{\#\{E^{i}_{t_{u}}\leq y\}}{M}\geq \alpha\big\}.\]
\end{itemize}
We refer to Step 1 as the simulation phase, Step 2 and 3 as the pre-computation phase and Step 4, 5 and 6 as the time-stepping of the method.\\

\textbf{Algorithm 2: European options}\\
Algorithm 1 can be modified to calculate the exposure of European options. In this case we do not exercise the option until maturity and thus only calculate the continuation value. More precisely, Step 5 in the Bermudan option algorithm is replaced by
\begin{itemize}
\item[5.] Iterative time stepping $t_{u+1}\rightarrow t_{u}$: Assume we have a Chebyshev approximation $V_{t_{u+1}}(x)\approx\widehat{V}_{t_{u+1}}(x)=\sum_{j\in J}c_{j}(t_{u+1})p_{j}(x)$,
\begin{itemize}
\item compute nodal values $\widehat{V}_{t_{u}}(x_{k})=\sum_{j\in J}c_{j}\Gamma_{k,j}$ and new coefficients $c_{j}(t_u)$,
\item price the option for all simulation paths $V_{t_{u}}^{i}=\widehat{V}_{t_u}(X_{t_{u}}^{i})=\sum_{j\in J}c_{j}(t_{u})p_{j}(X_{t_{u}}^{i})$,
\item calculate the exposure $E_{t_{u}}^{i}=\max\{V_{t_{u}}^{i},0\}$.
\end{itemize}
\end{itemize}

\textbf{Algorithm 3: Barrier options}\\
Moreover, Algorithm 1 can be modified to calculate the exposure of barrier options. In this case the interpolation domain is chosen depending on the barrier. There is no early exercise, however, we need to take care of the knock-out feature. For an up-and-out option with barrier $b$, Step 2 and Step 5 in the Bermudan option algorithm are replaced by
\begin{itemize}
\item[2.] Find a suitable domain $\mathcal{X}=[\ul{x},b]$ and define nodal points $x_{k}$, $k=0,\ldots,N$.
\item[5.] Iterative time stepping $t_{u+1}\rightarrow t_{u}$: Assume we have a Chebyshev approximation $V_{t_{u+1}}(x)\approx\widehat{V}_{t_{u+1}}(x)=\sum_{j}c_{j}(t_{u+1})p_{j}(x)$,
\begin{itemize}
\item compute nodal values $\widehat{V}_{t_{u}}(x_{k})=\sum_{j=0}^{N}c_{j}\Gamma_{k,j}$ and new coefficients $c_{j}(t_u)$,
\item price the option for all simulation paths $V_{t_{u}}^{i}=\widehat{V}_{t_u}(X_{t_{u}}^{i})=\sum_{j\in J}c_{j}(t_{u})p_{j}(X_{t_{u}}^{i})$ if $X_{t_{u}}^{i}\leq b$ and $V_{t_{u}}^{i}=0$ otherwise,
\item calculate the exposure $E_{t_{u}}^{i}=\max\{V_{t_{u}}^{i},0\}$,
\item if the option is knocked-out, i.e if $X_{t_{u}}^{i}> b$ update the exposure at all future time steps on this path $E_{t_{j}}^{i}$, $j=u+1,\ldots,n$.
\end{itemize}
\end{itemize}

\subsection{Conceptional benefits of the method}
The presented algorithms provide efficient solutions for the exposure calculation. Moreover, the structure of the new approach comes with conceptual benefits, which can be exploited in practice.\\

\textbf{Efficient computation of the conditional moments}\\
The conditional expectations of the Chebyshev polynomials $\EE^{\mathbb{Q}}\left[p_{j}(X_{t_{u+1}})\vert X_{t_{u}}=x_{k}\right]$ depend only on the underlying process and can be pre-computed prior to the time-stepping. Here two different cases have to be distinguished.\\
If the underlying process $X_{t_{u+1}}\vert X_{t_{u}}=x$ is normally distributed the conditional expectations of the Chebyshev polynomials can be calculated analytically. Examples are the Black-Scholes model (with log-stock price $X_t$), the Vasicek model or the one factor Hull-White model (both with interest rate $X_t$). More generaly, assume for instance the underlying process is modelled via an SDE of the form
\begin{align*}
\text{d}X_{t}=\alpha(t,X_{t})\text{d}t + \beta(t,X_{t})\text{d}W_{t}
\end{align*}
for a standard Brownian motion $W_{t}$ with Euler–Maruyama approximation
\begin{align*}
X_{t_{u+1}}\approx x + \alpha(t_u,x)(t_{u+1}-t_{u}) + \beta(t_{u},x)\sqrt{t_{u+1}-t_{u}}Z =:\widehat{X}^{x}_{t_{u+1}} \qquad Z\sim\mathcal{N}(0,1)
\end{align*}
and the right hand side is thus normally distributed. The following proposition provides an analytic formula for the conditional moments $\EE^{\mathbb{Q}}[p_{j}(\widehat{X}^{x_{k}}_{t_{u+1}})]$.
\begin{proposition}\label{prop:moments_BS_model}
Assume that $X_{t}$ is a stochastic process with $X_{t_{u+1}}\vert X_{t_{u}}=x_{k}\sim\mathcal{N}(x_{k}+\Delta t\:\!\:\!\mu,\Delta t\sigma^{2})$ with $\Delta t=t_{u+1}-t_{u}$. Then the conditional moments can be written as
\begin{align*}
&\EE[p_j(X_{t_{u+1}})\vert X_{t_u}=x]=\EE[T_{j}(Y)\1_{[-1,1]}(Y)]\\
&Y\sim\mathcal{N}\Big(1-2\frac{\ol{x}-x}{\ol{x}-\ul{x}}+\frac{2}{\ol{x}-\ul{x}}\Delta t\:\!\:\!\mu, \big(\frac{2}{\ol{x}-\ul{x}}\big)^{2}\Delta t\sigma^{2}\Big).
\end{align*}
\end{proposition}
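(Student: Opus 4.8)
The plan is to reduce the statement to a direct change of variables under the affine map $\tau_{\mathcal{X}}$ that links the reference interval $[-1,1]$ to the interpolation domain $\mathcal{X}=[\ul{x},\ol{x}]$. The starting point is the definition of the transformed Chebyshev polynomials, $p_j(x)=T_j(\tau^{-1}_{\mathcal{X}}(x))\1_{\mathcal{X}}(x)$, together with the explicit form of $\tau_{\mathcal{X}}$ in \eqref{Transformation}. Everything then follows from the fact that affine images of Gaussians are Gaussian and that $\tau_{\mathcal{X}}$ is a bijection between the two intervals.

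First I would invert $\tau_{\mathcal{X}}$. Solving $x=\ol{x}+0.5(\ul{x}-\ol{x})(1-z)$ for $z$ gives the affine map
\begin{align*}
\tau^{-1}_{\mathcal{X}}(x)=1-2\frac{\ol{x}-x}{\ol{x}-\ul{x}}=\frac{2}{\ol{x}-\ul{x}}\,x+\Big(1-\frac{2\ol{x}}{\ol{x}-\ul{x}}\Big).
\end{align*}
Next I would set $Y:=\tau^{-1}_{\mathcal{X}}(X_{t_{u+1}})$ conditionally on $X_{t_u}=x$. Since $\tau^{-1}_{\mathcal{X}}$ is affine with slope $2/(\ol{x}-\ul{x})$ and, by assumption, $X_{t_{u+1}}\mid X_{t_u}=x\sim\mathcal{N}(x+\Delta t\,\mu,\Delta t\,\sigma^2)$, the variable $Y$ is again normally distributed. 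I would read off its parameters from the standard transformation rules: the mean is $\tau^{-1}_{\mathcal{X}}(x+\Delta t\,\mu)=1-2\frac{\ol{x}-x}{\ol{x}-\ul{x}}+\frac{2}{\ol{x}-\ul{x}}\Delta t\,\mu$ and the variance is $\big(2/(\ol{x}-\ul{x})\big)^2\,\Delta t\,\sigma^2$, which are exactly the parameters stated in the proposition.

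The remaining ingredient is the indicator. Because $\tau_{\mathcal{X}}$ is a bijection from $[-1,1]$ onto $\mathcal{X}$, we have $X_{t_{u+1}}\in\mathcal{X}$ if and only if $Y\in[-1,1]$, so that $\1_{\mathcal{X}}(X_{t_{u+1}})=\1_{[-1,1]}(Y)$ almost surely. Substituting into the definition of $p_j$ yields the pathwise identity $p_j(X_{t_{u+1}})=T_j(Y)\1_{[-1,1]}(Y)$, and taking conditional expectations given $X_{t_u}=x$ produces the claim. I do not anticipate a genuine obstacle—the result is a clean change-of-variables identity—so the only care required is the bookkeeping of the affine coefficients when verifying that the transformed mean and variance match the parameters in the statement, and the observation that the reference-interval indicator must be carried along so that the expectation is correctly restricted to $[-1,1]$.
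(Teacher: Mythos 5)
Your proposal is correct and follows essentially the same route as the paper: invert the affine map $\tau_{\mathcal{X}}$, observe that the affine image of the conditional Gaussian is Gaussian with the stated mean and variance, and carry the indicator through the bijection. The only cosmetic difference is that the paper first rewrites the conditional expectation in terms of the increment $X_{\Delta t}$ before applying $\tau^{-1}_{\mathcal{X}}$, whereas you apply the change of variables directly to the conditional law given in the hypothesis, which is if anything slightly cleaner.
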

\begin{proof}
From the properties of a Brownian motion with drift follows
\begin{align*}
\EE[p_j(X_{t_{u+1}})\vert X_{t_u}=x]=\EE[p_j(x+(X_{t_{u+1}}-X_{t_{u}}))]=\EE[p_j(x+X_{\Delta t})].
\end{align*}
The definition of $p_{j}$ and the inverse of the linear transformation $\tau_{[\ul{x},\ol{x}]}$ yield
\begin{align*}
\EE[p_j(x+X_{\Delta t})]&=\EE[T_j(\tau^{-1}_{[\ul{x},\ol{x}]}(x+X_{\Delta t})\1_{[\ul{x},\ol{x}]}(x+X_{\Delta t})]\\
&=\EE[T_{j}(1-2\frac{\ol{x}-(x+X_{\Delta t})}{\ol{x}-\ul{x}})\1_{[\ul{x},\ol{x}]}(x+X_{\Delta t})]\\
&=\EE[T_{j}(1-2\frac{\ol{x}-x}{\ol{x}-\ul{x}}+\frac{2}{\ol{x}-\ul{x}}X_{\Delta t})\1_{[\ul{x},\ol{x}]}(x+X_{\Delta t})]\\
&=\EE[T_{j}(Y)\1_{[-1,1]}(Y)]
\end{align*}
with $Y$ defined as
\begin{align*}
Y=1-2\frac{\ol{x}-x}{\ol{x}-\ul{x}}+\frac{2}{\ol{x}-\ul{x}}X_{\Delta t}
\end{align*}
and we used that for a linear transformation holds
\begin{align*}
\1_{[\ul{x},\ol{x}]}(x+X_{\Delta t})]=\1_{[\tau^{-1}_{[\ul{x},\ol{x}]}(\ul{x}),\tau^{-1}_{[\ul{x},\ol{x}]}(\ol{x})]}(\tau^{-1}_{[\ul{x},\ol{x}]}(x+X_{\Delta t}))=\1_{[-1,1]}(Y).
\end{align*}
The properties of a normally distributed variable yields our claim.
\end{proof}

\begin{proposition}
Let $Y\sim\mathcal{N}(\mu,\sigma^{2})$ be a normally distributed random variable with density $f$ and distribution function $F$. The truncated generalized moments $\mu_{j}=\EE[T_{j}(Y)\1_{[-1,1]}(Y)]$ are recursively defined by
\begin{align*}
\mu_{n+1}=2\mu\mu_{n} - 2\sigma^{2}\big(f(1)-f(-1)T_{n}(-1)-2(n-1)\sum_{j=0}^{n-2}{}^{'}\mu_{j}\1_{(n+j)\bmod 2=0}\big)-\mu_{n-1}
\end{align*}
for $n\geq 1$ and starting values $\mu_{0}=F(1)-F(-1)$, $\mu_{1}=\mu\mu_{0}-\sigma^{2}(f(1)-f(-1)$ and where $\sum{}^{'}$ indicates that the first term is multiplied with $1/2$.
\end{proposition}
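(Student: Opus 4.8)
The plan is to turn the Chebyshev three-term recurrence \eqref{Chebpoly_recur} into a recurrence for the moments, using one Gaussian integration by parts to deal with the factor $y$ that the recurrence introduces. First I would apply $T_{n+1}(y)=2yT_{n}(y)-T_{n-1}(y)$ under the expectation and use linearity to get
\begin{align*}
\mu_{n+1}=\EE[T_{n+1}(Y)\1_{[-1,1]}(Y)]=2\,\EE[Y\,T_{n}(Y)\1_{[-1,1]}(Y)]-\mu_{n-1},
\end{align*}
so that $\mu_{n-1}$ passes straight through and the entire task reduces to evaluating the weighted truncated moment $\nu_{n}:=\EE[Y\,T_{n}(Y)\1_{[-1,1]}(Y)]=\int_{-1}^{1}y\,T_{n}(y)\,f(y)\,\dy$.

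The analytic ingredient is the first-order differential equation satisfied by the Gaussian density, $f'(y)=-\sigma^{-2}(y-\mu)f(y)$, equivalently $yf(y)=\mu f(y)-\sigma^{2}f'(y)$. Inserting this into $\nu_{n}$ separates it into a term proportional to $\mu_{n}$ and a term containing $f'$,
\begin{align*}
\nu_{n}=\mu\int_{-1}^{1}T_{n}(y)f(y)\,\dy-\sigma^{2}\int_{-1}^{1}T_{n}(y)f'(y)\,\dy=\mu\,\mu_{n}-\sigma^{2}\int_{-1}^{1}T_{n}(y)f'(y)\,\dy.
\end{align*}
I would then integrate the last integral by parts. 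Since $T_{n}(1)=1$ for all $n$, the boundary contribution is $f(1)-T_{n}(-1)f(-1)$, and
\begin{align*}
\int_{-1}^{1}T_{n}(y)f'(y)\,\dy=f(1)-T_{n}(-1)f(-1)-\int_{-1}^{1}T_{n}'(y)f(y)\,\dy.
\end{align*}

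It remains to reduce $\int_{-1}^{1}T_{n}'(y)f(y)\,\dy$ to lower-order moments. For this I would use the standard expansion of the Chebyshev derivative in the Chebyshev basis (see \cite{Trefethen2013}), namely $T_{n}'=2n\sum{}^{'}T_{j}$ where the sum runs over the indices $j\in\{0,\ldots,n-1\}$ of parity opposite to $n$ (those with $n+j$ odd) and the $j=0$ term is halved. Integrating term by term turns this into $2n\sum{}^{'}\mu_{j}$, a finite combination of strictly lower-order truncated moments. Because only indices $j<n$ enter, the recurrence is well founded, and substituting the three contributions back into $\nu_{n}$ and then into $\mu_{n+1}=2\nu_{n}-\mu_{n-1}$ produces the stated closed recurrence, with the bracket multiplying $-2\sigma^{2}$ assembled from the boundary term $f(1)-T_{n}(-1)f(-1)$ minus the derivative sum.

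The two base cases come for free from the same identities: $\mu_{0}=\EE[\1_{[-1,1]}(Y)]=\mathbb{P}(-1\le Y\le 1)=F(1)-F(-1)$, and a single use of $yf(y)=\mu f(y)-\sigma^{2}f'(y)$ gives $\mu_{1}=\EE[Y\1_{[-1,1]}(Y)]=\mu\mu_{0}-\sigma^{2}(f(1)-f(-1))$. I expect the only real obstacle to be bookkeeping rather than ideas: pinning down the exact constant and the parity of the surviving indices in the derivative expansion, and keeping the halving convention ($\sum{}^{'}$) and the sign of the boundary term consistent. I would verify the assembled formula against $n=1,2,3$ by direct computation as a check that the coefficient in front of the moment sum and the parity selection have been tracked correctly.
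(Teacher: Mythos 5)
Your route is the same as the paper's: the three‑term recurrence reduces $\mu_{n+1}$ to $2\,\EE[Y\,T_{n}(Y)\1_{[-1,1]}(Y)]-\mu_{n-1}$, the Gaussian identity $yf(y)=\mu f(y)-\sigma^{2}f'(y)$ plus one integration by parts removes the factor $y$, and $T_{n}'$ is re‑expanded in the Chebyshev basis. The only real difference is cosmetic: you invoke the closed form $T_{n}'=2n\sum{}^{'}T_{j}$ (over $j<n$ with $n+j$ odd) directly, while the paper derives the equivalent identity $\mu^{\prime}_{n+1}=2(n+1)\sum_{j=0}^{n}{}^{'}\mu_{j}\1_{(n+j)\bmod 2=0}$ by induction from the recurrence $T^{\prime}_{n+1}=2(n+1)T_{n}+\frac{n+1}{n-1}T^{\prime}_{n-1}$.

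Where your proposal goes wrong is the final sentence: the assembly does \emph{not} produce the recurrence as stated, and the bookkeeping you defer is exactly where the discrepancy sits. Integration by parts gives $\int_{-1}^{1}T_{n}(y)f'(y)\,\dy=T_{n}(1)f(1)-T_{n}(-1)f(-1)-\int_{-1}^{1}T_{n}'(y)f(y)\,\dy$, and by your (correct) expansion $\int_{-1}^{1}T_{n}'(y)f(y)\,\dy=2n\sum_{j=0}^{n-1}{}^{'}\mu_{j}\1_{(n+j)\bmod 2=1}$, i.e.\ the derivative moment of order $n$. The stated recurrence instead contains $2(n-1)\sum_{j=0}^{n-2}{}^{'}\mu_{j}\1_{(n+j)\bmod 2=0}$, which is the derivative moment of order $n-1$; the paper's own proof makes this same substitution of $\mu^{\prime}_{n-1}$ for $\int_{-1}^{1}T_{n}'(y)f(y)\,\dy$ right after the integration by parts, which is an index slip. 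The check you propose at $n=1$ settles it: the correct value is $\mu_{2}=2\mu\mu_{1}-2\sigma^{2}\big(f(1)+f(-1)-\mu_{0}\big)-\mu_{0}$ (your derivation, since $\int_{-1}^{1}T_{1}'f=\mu_{0}$), whereas the printed recurrence drops the $\mu_{0}$ term entirely because its sum carries the prefactor $2(n-1)=0$. So carry your plan through to the end and trust your own assembly, with $2n\sum_{j=0}^{n-1}{}^{'}\mu_{j}\1_{(n+j)\bmod 2=1}$ inside the bracket; do not force it to match the statement, and do not be misled by the fact that the appendix proof lands on the printed formula --- it does so only via the same off‑by‑one substitution.
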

\begin{proof}
The proof can be found in the appendix.
\end{proof}
For a large model class for which the underlying process is conditionally normally distributed or can be approximated by such a process, the conditional moments can thus be efficiently computed by an analytic formula.\\

If the underlying process is not normally distributed numerical approximation techniques come into play. \cite{GlauMahlstedtPoetz2019} give an overview of different approaches which can be used to calculate the conditional expectations. For example numerical quadrature techniques using the density or characteristic function of the process or with the help of Monte Carlo simulations. The possibility to use different approaches gives us the flexibility to apply the method in a variety of models.\\

When we use an equidistant time stepping $t_{u+1}-t_{u}=\Delta t$ the problem can be further simplified. Assuming
\begin{align}\label{eq_stationarity_assumption}
\EE^{\mathbb{Q}}\left[p_{j}(X_{t_{u+1}})\vert X_{t_{u}}=x_{k}\right]
=\EE^{\mathbb{Q}}\left[p_{j}(X_{\Delta t})\vert X_{0}=x_{k}\right],
\end{align}
the pre-computation step becomes independent of the maturity $T$ and the number of time steps $n$. We only have to simulate the underlying at $\Delta t$. Equation \eqref{eq_stationarity_assumption} holds if the process $(X_{t})_{0\leq t\leq T}$ has stationary increments.\\ 

\textbf{Delta and Gamma as by-product of the method}\\
Generally, the efficiency of the method allows a fast computation of sensitivities via bump and re-run. For Delta and Gamma the polynomial structure of the Chebyshev approximation allows for a direct computation without re-running the time-stepping. Instead we only need to differentiate a polynomial. For Delta we obtain 
\begin{align*}
\frac{\partial V}{\partial x}(x)\approx\sum_{j=0}^{N}c_{j}\frac{\partial p_{j}}{\partial x}(x),
\end{align*} 
which is again a polynomial with degree $N-1$ and for Gamma we obtain
\begin{align*}
\frac{\partial^{2} V}{\partial x^{2}}(x)\approx\sum_{j=0}^{N}c_{j}\frac{\partial^{2} p_{j}}{\partial x^{2}}(x),
\end{align*} 
a polynomial of degree $N-2$.\\

\textbf{Several options on one underlying:}\\
The structure of the dynamic Chebyshev algorithm for exposure calculation exhibits additional benefits for the complex derivative portfolios. For instance, consider non-directional strategies and structured products that offer different levels of capital protection or enhanced exposure. They are typically constructed from a combination of European options, with different strikes and maturities, together with Bermudan options and barrier options. Such structures are essentially a portfolio of derivatives on the same underlying asset, and in this case, the pricing and exposure calculation can be simplified by choosing the same interpolation domain. First, we only need to compute the conditional moments once and  then we can use them for all options. Second, we require less computation in the exposure calculation. Assume we have two options and we are in the time stepping of the Dynamic Chebyshev algorithm at step $t_{u}$. We have two Chebyshev approximations $\widehat{V}^{1}_{t_{u}}=\sum c^{1}_{j}(t_u)p_j$ and $\widehat{V}^{2}_{t_{u}}=\sum c^{2}_{j}(t_u)p_j$. For the exposure calculation we need to compute $\widehat{V}^{1/2}_{t_u}(X_{t_{u}}^{i})=\sum c^{1/2}_{j}(t_u)p_j(X_{t_{u}}^{i})$ for all risk factors $i=1,\ldots,M$. Hence the evaluation of the Chebyshev polynomials $p_{j}$ at the risk factors $X_{t_{u}}^{i}$ is the same and has only to be done once. In summary, with low additional effort, we can calculate the exposure of several options on one underlying. 

\subsection{Implementational aspects of the DC method for exposure calculation}
In this section, we discuss several implementational aspects which can help to achieve a high performance.\\

\textbf{Choice of interpolation domain:}\\
The choice of a suitable interpolation domain is an important step to ensure a high efficiency of the method. In order to do so one can explore additional knowledge of the specific product. In general the choice of the domain is a trade-off between speed (small domain, low number of nodal points) and accuracy (larger domain, more nodal points). We give an idea how to find a domain for three different examples.

First we consider a Bermudan put option. Here we know that the value of the option converges towards zero if the (log-) price of the underlying goes to infinity. The upper bound $\ol{x}$ is therefore no problem and if we have a risk factor with $X^{i}_{t_{u}}>\ol{x}$ we can simply set $V_{t_{u}}(X^{i}_{t_{u}})=0$. For very low values of $x$ the option is always exercised and thus we set $V_{t_{u}}(X^{i}_{t_{u}})=g(X^{i}_{t_{u}})$ if $X^{i}_{t_{u}}<\ul{x}$.

For an European call or put option we can use the Call-Put parity $C_{t}(x)-P_{t}(x)=e^{x}-e^{-r(T-t)}K$ to find a suitable interpolation domain. The price of a call option converges towards zero for small $x$ and towards $e^{x}-e^{-r(T-t)}K$ for large $x$. We choose $\ul{x}$, $\ol{x}$ such that $C_{t}(\ul{x})$ and $P_{t}(\ol{x})$ are sufficiently small. Then we can set $V_{t_{u}}(X^{i}_{t_{u}})=0$ for $X^{i}_{t_{u}}<\ul{x}$ and $V_{t_{u}}(X^{i}_{t_{u}})=e^{X^{i}_{t_{u}}}-e^{-r(T-t)}K$ for $X^{i}_{t_{u}}>\ol{x}$.

As our last example we consider an up-and-out call option with barrier $b$. Here $b$ is the logical upper bound of the interpolation domain and for $\ul{x}$ we proceed similarly to the European call option case.\\

\textbf{Smoothing:}\\
If the payoff of the option has a kink or discontinuity the approximation with Chebyshev polynomials is not efficient. In this case we can modify the algorithm and improve convergence by a "smoothing" of the first time step. We can exploit that the continuation value at $t_{n-1}$ is exactly the value of a European option with duration $\Delta t=t_{n}-t_{n-1}$, i.e.
\begin{align}\label{eq:payoff_smoothing}
V_{t_{n-1}}(x)=\max\{g(x),P^{EU}(x)\} \quad \text{with} \quad P^{EU}(x)=\EE^{\mathbb{Q}}[g(X_{t_{n}})\vert X_{t_{n-1}}=x].
\end{align}
Often, it is more efficient to compute directly the European option price $\EE^{\mathbb{Q}}[g(X_{t_{n}})\vert X_{t_{n-1}}=x_{k}]$ at the nodal points $x_{k}$, $k=0,\ldots,N$. Hence, there is no interpolation error in the first step and we start with the interpolation of the (smooth) function $V_{t_{n-1}}$. We use this technique for all our numerical experiments. The influence of this modification on the error decay is investigated in \cite{GlauMahlstedtPoetz2019}.\\

\textbf{Splitting:}\\
If the value function is not smooth enough and we require a relatively high accuracy the number of Chebyshev nodes can become large. In this case it is often beneficial to split the domain in subdomains and interpolate on each of the subdomains. For a Bermudan option the boundary of the exercise region would be the right splitting point. On each subdomain the option value is now very smooth and much fewer nodal points are needed to achieve the same accuracy. Unfortunately, this approach has the drawback that the conditional expectations can no longer be precomputed. If the underlying process is normally distributed and we haven an analytic solution for the condition expectations this is not a problem. However, if the computation of the conditional expectations is numerically costly the approach becomes too complex and thus inefficient.\\

\section{Numerical experiments}
In this section, we investigate the Dynamic Chebyshev method numerically by calculating the credit exposure for Bermudan and barrier options. We provide evidence that the Dynamic Chebyshev approach is well-suited for the exposure calculation and investigate how the runtime of the method behaves when we increase the number of simulations. More precisely, we compute the expected exposure and the potential future exposure for a Bermudan put option and an up-and-out barrier option in three different asset models. The resulting exposure profiles over the lifetime of the options are displayed and we discuss their plausibility. As models for the underlying risk factor we use the Black-Scholes model (diffusion), the Merton model (jump-diffusion) and the CEV model (local volatility). In each of the models we exploit a different technique to calculate the generalized moments in order to show the flexibility of the Dynamic Chebyshev approach. Moreover, we run the method for different numbers of simulation paths and investigate the runtimes of the three phases of the method, i.e. simulation, pre-computation and time-stepping. 

\subsection{Problem description}
We consider two option pricing problems. First, we choose a Bermudan put option with maturity $T$ and exercise dates $t_{u}$, $u=0,\ldots,n$ equally distributed between $0$ and $T$. This yields for the value function
\begin{align*}
V_{T}(x)&=(K-e^{x})^{+}\\
V_{t_{u}}(x)&=\max\left\{(K-e^{x})^{+},e^{-r(t_{u+1}-t_{u})}\EE[V_{t_{u+1}}(X_{t_{u+1}})\vert X_{t_{u}}=x]\right\}
\end{align*}
with strike $K$. 

Second, we consider a discretely monitored barrier up-and-out call option with maturity $T$. We assume that the barrier is monitored on dates $t_{u}$, $u=0,\ldots,n$ equally distributed between $0$ and $T$. The pricing problem becomes
\begin{align*}
V_{T}(x)&=(e^{x}-K)^{+}\1_{(-\infty,b]}(x)\\
V_{t_{u}}(x)&=\EE[V_{t_{u+1}}(X_{t_{u+1}})\vert X_{t_{u}}=x]\1_{(-\infty,b]}(x)
\end{align*}
with strike $K$, barrier $B$ and $b=\log(B)$. 

We fix for both options a strike of $K=100$, initial stock price $S_{0}=100$ and interest rate $r=0.03$. Moreover, we fix the number of exercise dates (Bermudan) or monitor days (barrier) as $n=52$ which is a weekly exercise schedule and use the same days for the exposure calculation. For the experiments we introduce the following three asset price models and explain how we compute the corresponding generalized moments.\\

\textbf{The Black-Scholes model}\\
In the classical model of \cite{BlackScholes1973} the stock price process is modelled by the SDE
\begin{align*}
\text{d}S_{t}=\mu S_{t}\text{d}t + \sigma S_{t}\text{d}W_{t}.
\end{align*}
with drift $\mu$ and volatility $\sigma>0$ under the real-world measure $\mathbb{P}$. Under the pricing measure $\mathbb{Q}$ the drift equals $r$. Exploiting the fact that the log-returns $X_{t}=\log(S_{t}/S_{0})$ are normally distributed we obtain an analytic formula for the generalized moments $\Gamma_{k,j}$. As model parameter we fix  volatility $\sigma=0.25$ and drift $\mu=0.1$.\\

\textbf{The Merton jump diffusion model}\\
The jump diffusion model introduced by \cite{Merton1976} adds jumps to the classical Black-Scholes model. The log-returns follow a jump diffusion with volatility $\sigma$ and added jumps arriving at rate $\lambda>0$ with normal distributed jump sizes according to $\mathcal{N}(\alpha,\beta^{2})$. The stock price under $\mathbb{P}$ is modelled by the SDE
\begin{align*}
\text{d}S_{t}=\mu S_{t}\text{d}t + \sigma S_{t}\text{d}W_{t} + \text{d}J_{t}
\end{align*}
for a compound Poisson process $J_{t}$ with rate $\lambda$. The characteristic function of the log-returns $X_{t}=\log(S_{t}/S_0)$ under the pricing measure $\mathbb{Q}$ is given by
\begin{align*}
\varphi(z)=exp\left(t\left(ibz - \frac{\sigma^{2}}{2}z^{2} + \lambda\left(e^{iz\alpha - \frac{\beta^{2}}{2}z^{2}}-1\right)\right)\right)
\end{align*} 
with risk-neutral drift
\begin{align*}
b=r-\frac{\sigma^{2}}{2}-\lambda\left(e^{\alpha+\frac{\beta^{2}}{2}}-1\right).
\end{align*}
In our experiments we calculate the conditional expectations $\Gamma_{k,j}$ using numerical integration and the Fourier transforms of the Chebyshev polynomials along with the characteristic function of $X_t$. We fix the parameters 
\[\sigma=0.25,\quad \alpha=-0.5,\quad \beta=0.4,\quad \lambda=0.4\quad \text{and} \quad  \mu=0.1.\]

\textbf{The Constant Elasticity of Variance model}\\
The Constant Elasticity of Variance model (CEV) as stated in \cite{Schroder1989} is a local volatility model based on the stochastic process
\begin{align}\label{CEV_model_SDE}
\text{d}S_{t}=\mu S_{t}\text{d}t + \sigma S_{t}^{\beta/2}\text{d}W_{t} \qquad \text{for} \qquad \beta>0.
\end{align}
Hence the stock volatility $\sigma S_{t}^{(\beta-2)/2}$ depends on the current level of the stock price. For the special case $\beta=2$ the model coincides with the Black-Scholes model. However, from market data one typically observes a $\beta<2$. The CEV-model is one example of a model which has neither a probability density, nor a characteristic function in closed-form. Hence, Monte-Carlo simulation is utilised to calculate the conditional expectations $\Gamma_{k,j}$. This means $X_{\Delta t}$ has to be simulated under $\mathbb{Q}$ for different starting values $X_{0}=x_{k}$, $k=0,\ldots,N$. For our experiments we fix the following parameters
\[\sigma=0.3,\quad \beta=1.5\quad \text{and} \quad  \mu=0.1.\]

Based on the chosen numerical techniques we expect that the pre-computation phase in the Black-Scholes model is the fastest due to the analytic formula and the one in the CEV model is the slowest. 

\subsection{Credit exposure of Barrier options}
In this section, we investigate the expected exposure and the potential future exposure of a barrier option. We consider a call option with up-and-out barrier $B=150$ in the Black-Scholes and Merton jump-diffusion model. In the CEV model we choose a lower barrier of $B=125$ because of the lower volatility for higher stock prices and the absence of jumps. We fix the interpolation domain $\mathcal{X}=[\ul{x},\ol{x}]$ with $\ul{x}=\log(10)$ and $\ol{x}=\log(B)$. Moreover, we fix $N=40$ Chebyshev points. We price the option and calculate the exposure for an increasing number of simulations paths of the underlying risk factor.

\subsubsection{Exposure profiles}
Figure \ref{fig:EE_Barrier} shows the expected exposure (EE) in the three models over the option's lifetime and for a fix number of simulation paths. We observe a different behaviour in each of the three models. The expected exposure increases slightly in the Black-Scholes model, decreases slightly in the Merton model and increases in the CEV model. Two different effects have an influence on the expected exposure. On the one hand, the option is an at-the-money call option and due to the positive drift of the underlying (under the real-world measure) we expect that over the option's lifetime more and more paths will be in the money which leads to an increase of the expected exposure. On the other hand, when the underlying increases there is a higher risk that the underlying reaches the barrier and the option is knocked out. Thus the expected exposure should decay. Depending on the model properties one of the effects might dominate the other. In the presence of jumps, as in the Merton model, there is an additional risk that the price of the underlying jumps over the barrier. The risk increases over time with a positive drift. This explains the decreasing exposure over time. In the CEV model however, the volatility decreases with a higher stock price and thus very large upward movements are less likely. Therefore, the option's exposure increases due to the positive drift of the underlying risk factor. The Black-Scholes model has also no jumps but a constant volatility which explains that it lies between the two other models.

Figure \ref{fig:PFE_Barrier} shows the corresponding potential future exposure (PFE) at the $97,5\%$ level in the three models. Here, we observe a more similar behaviour in the models. The potential future exposure increases over time due to the positive drift and the diffusion term in all three models. The PFE converges towards the maximal possible exercise value $B-K$ which is $50$ in the Black-Scholes and Merton model and $25$ in the CEV model.

Already this simplified toy example makes it evident that the expected exposure profile of a barrier option critically depends on the model choice. Therefore, it is essential to have a method for the exposure calculation which is \textit{flexible} in the model choice. In particular, one needs to be able to handle different model features such as jumps and local volatility. As in practice a variety of models will be used to quantify the credit exposure a method which allows to easily switch between different models is desirable. As the experiments show the Dynamic Chebyshev method for exposure calculation exhibits this flexibility in the models.

\begin{figure}[H] 
\centering
\includegraphics[width=\textwidth]{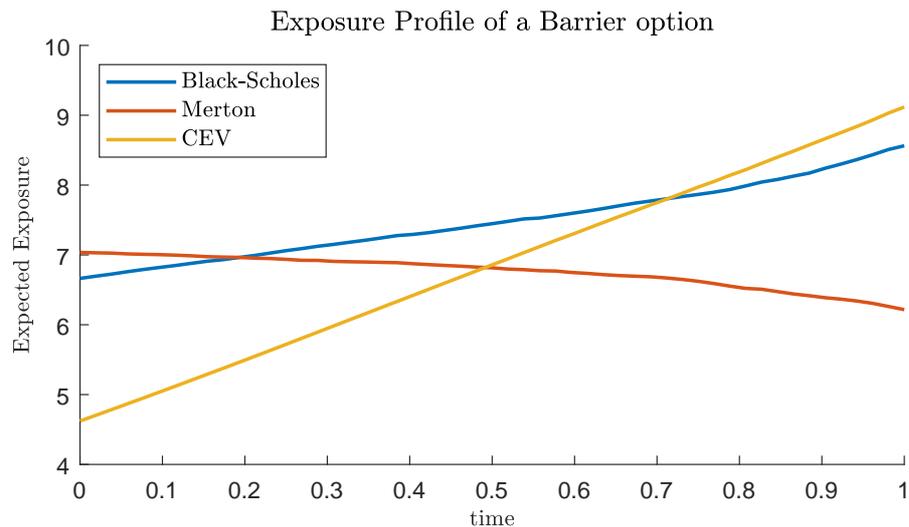} 
\caption{Expected exposure (EE) for a barrier option in the Black-Scholes model, the Merton model and the CEV model with maturity $T=1$ and $N_{sim}=50000$.}
  \label{fig:EE_Barrier} 
\end{figure}

\begin{figure}[H] 
\centering
\includegraphics[width=\textwidth]{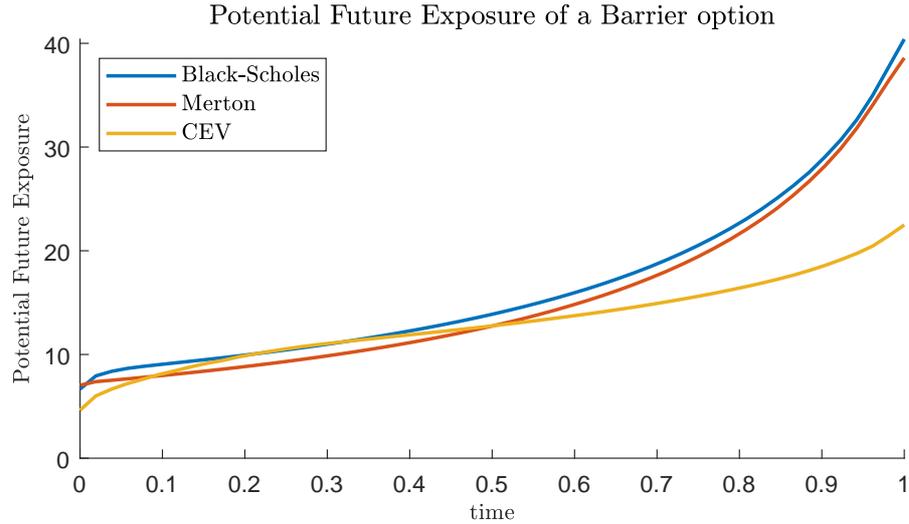} 
\caption{Potential future exposure (PFE) for a barrier option in the Black-Scholes model, the Merton model and the CEV model with maturity $T=1$ and $N_{sim}=50000$.}
  \label{fig:PFE_Barrier} 
\end{figure}

\subsubsection{Runtimes}
Figure \ref{fig:EE_Barrier_runtime} shows the runtime as function of the number of simulations in the three models and Table \ref{tab:EE_Barrier_rumtime} displays the corresponding runtimes. The total runtime of the exposure calculation is split into the runtimes of the simulation phase, the pre-computation step and the time-stepping. We observe that the runtime of the pre-computation step varies between the three models but is independent of the number of simulations. The analytic formula in the Black-Scholes model is the fastest with less than $0.01s$, then we have the Fourier approach (Merton model) with $0.02s$ and the Monte-Carlo approach (CEV model) with $0.34s$. For all three models the runtime of the pre-computation step is feasible. As an additional benefit, model specific information such as the characteristic function or the density can be exploited to improve efficiency. Moreover, we observe that the time-stepping is model independent, the runtimes are the same in all three cases and increase in the number of simulations. For the simulation phase we observe the model dependence or more precisely the dependence on the technique used for the simulation of the risk factors, which reflects our model specific knowledge. 

\begin{figure}[H]
\begin{minipage}{.5\linewidth}
\centering
\subfloat[]{\includegraphics[scale=.48]{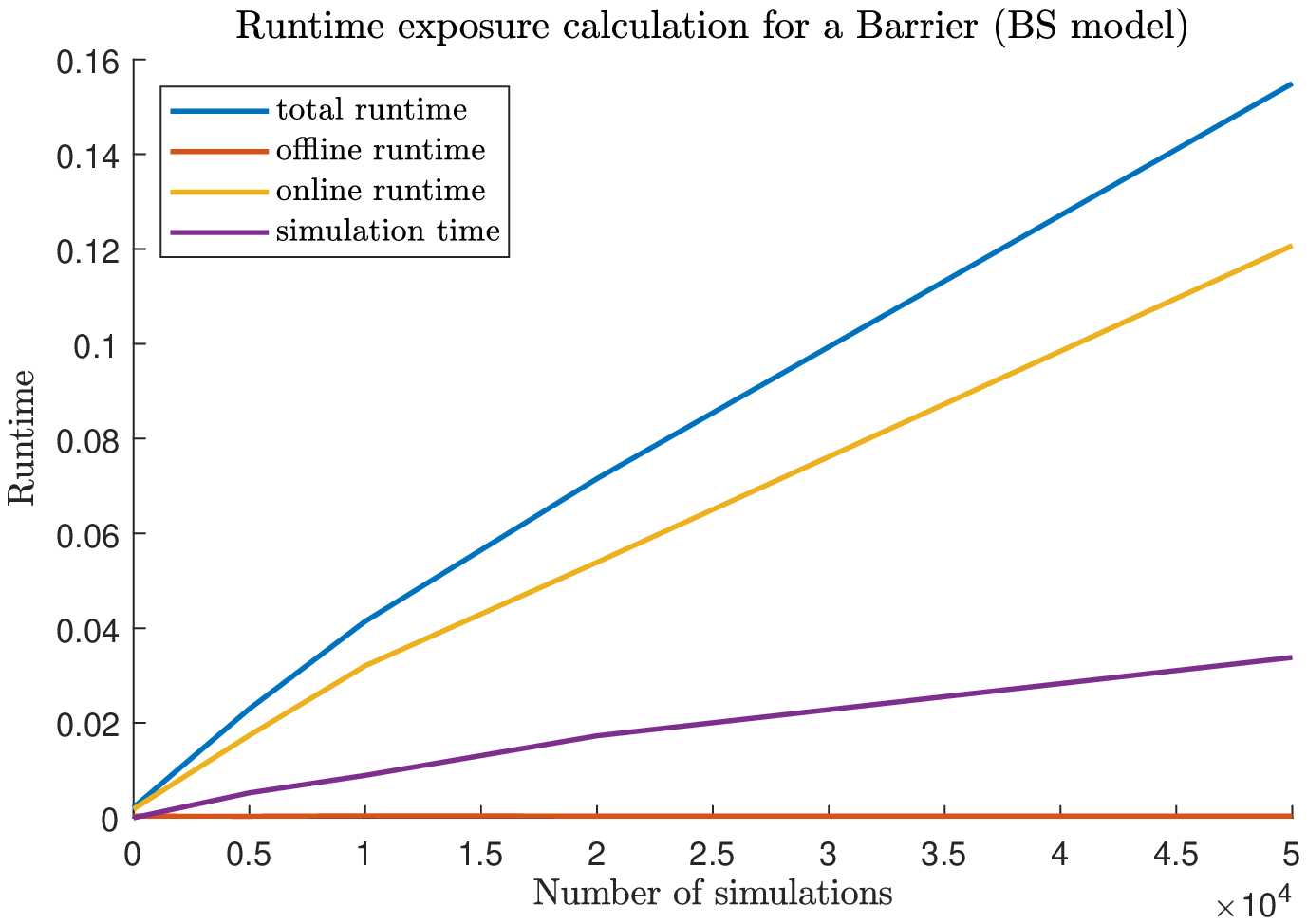}}
\end{minipage}%
\begin{minipage}{.5\linewidth}
\centering
\subfloat[]{\includegraphics[scale=.48]{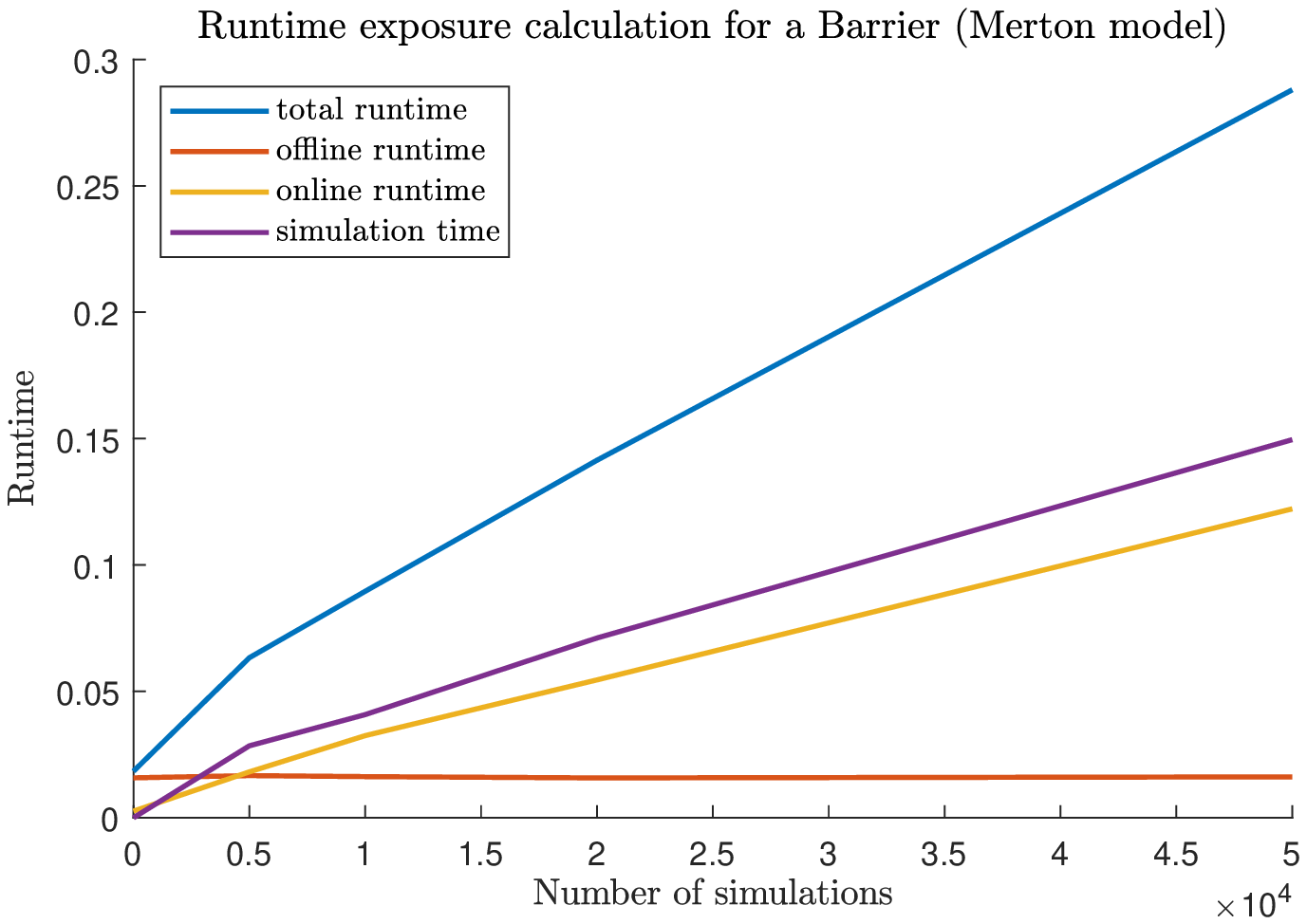}}
\end{minipage}\par\medskip
\centering
\subfloat[]{\includegraphics[scale=.48]{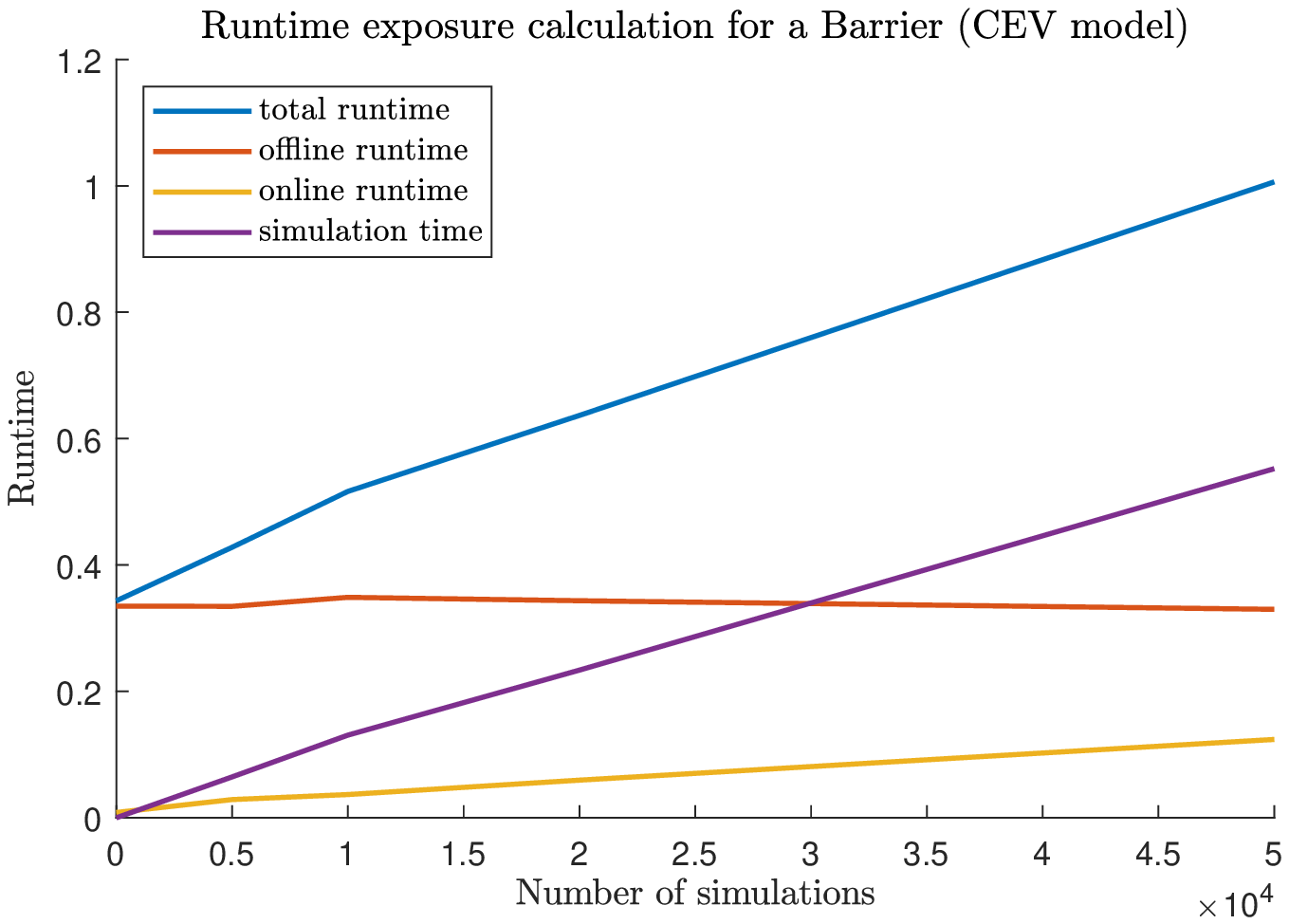}}
\caption{Runtime of the credit exposure calculation for a barrier option in the Black-Scholes model, the Merton model and the CEV model for different numbers of simulation paths. All three plots display the total runtime (blue), the runtime of the pre-computation step (red), the runtime of the time-stepping (yellow) and the runtime of the simulation step (purple).}
\label{fig:EE_Barrier_runtime}
\end{figure}

\begin{table}[H]
\begin{center}
\begin{tabular}{llccccc}
\hline 
 &$\qquad N_{sim}$ & $0$ & $5000$ & $10000$ & $20000$ & $50000$ \\ 
\hline 
\multirow{4}{*}{BS}
& Simulation & 0s & 0.01s & 0.01s & 0.02s & 0.04s\\ 
& Pre-computation & $<$0.01s & $<$0.01s & $<$0.01s & $<$0.01s & $<$0.01s\\
& Time-stepping & $<$0.01s & 0.03s & 0.04s & 0.07s & 0.14s\\ 
& Total & $<$0.01s & 0.04s & 0.05s & 0.09s & 0.18s\\  
\hline 
\multirow{4}{*}{Merton}
& Simulation & 0s & 0.02s & 0.05s & 0.06s & 0.15s\\ 
& Pre-computation & 0.02s & 0.02s & 0.02s & 0.02s & 0.02s\\
& Time-stepping & $<$0.01s & 0.03s & 0.04s & 0.06s & 0.12s\\ 
& Total & 0.02s & 0.07s & 0.11s & 0.15s & 0.30s\\  
\hline
\multirow{4}{*}{CEV}
& Simulation & 0s & 0.07s & 0.13s & 0.26s & 0.58s\\ 
& Pre-computation & 0.34s & 0.35s & 0.35s & 0.35s & 0.34s\\
& Time-stepping & 0.01s & 0.02s & 0.04s & 0.07s & 0.14s\\ 
& Total & 0.35s & 0.44s & 0.52s & 0.68s & 1.06s\\  
\hline 
\end{tabular} 
\caption{Runtime of the credit exposure calculation for a barrier option in the Black-Scholes (BS) model, the Merton model and the CEV model for different numbers of simulation paths.}
\label{tab:EE_Barrier_rumtime} 
\end{center}
\end{table}

\subsection{Credit exposure of Bermudan options}
We price a Bermudan option with the Dynamic Chebyshev algorithm and calculate the expected exposure \eqref{eq:EE} and the potential future exposure \eqref{eq:PFE}. We consider a Bermudan put option in the Black-Scholes, the Merton jump-diffusion model and in the CEV model. We fix the interpolation domain $\mathcal{X}=[\ul{x},\ol{x}]$ with $\ul{x}=\log(0.2)$ and $\ol{x}=\log(350)$. Moreover, we fix $N=150$ Chebyshev points. We price the option and calculate the exposure for an increasing number of simulation paths of the underlying risk factor.

\subsubsection{Exposure profiles}
We expect to observe a decreasing expected exposure over time due to the early-exercise feature of the Bermudan option. We know that if the option is exercised the exposure becomes zero. As we approach maturity, the optimal exercise boundary of the Bermudan put converges towards the option's strike. Therefore, the likeliness that the option is exercised early if the underlying is below the strike increases over the option's lifetime. Similarly, the likeliness that the option ends in-the-money if the underling is above the strike decreases over time. Both effects yield a decreasing exposure over the lifetime of the option. For the potential future exposure we expect to see two different effects. First, the diffusion term should lead to an increasing PFE whereas the early-exercise feature still leads to a decreasing PFE over time.

Figure \ref{fig:EE_Bermudan} and \ref{fig:PFE_Bermudan} show the expected exposure and the potential future exposure for a level of $\alpha=0.975$. We observe that the expected exposure decreases over the lifetime of the option in all three different models. Close to maturity the option is for most paths either already exercised or it is out-of-the-money. The potential future exposure first increases and then decreases in all three models. In both cases do the profiles of the Black-Scholes and Merton model behave similarly. The profiles of the CEV model exhibit however a slightly different shape. This indicates that the additional jumps in the Merton model have less striking influence on the behaviour of the exposure profiles than for a barrier option. The study shows us that the Dynamic Chebyshev method enables us to calculate the exposure of Bermudan options in different models.
\begin{figure}[H] 
\centering
\includegraphics[width=\textwidth]{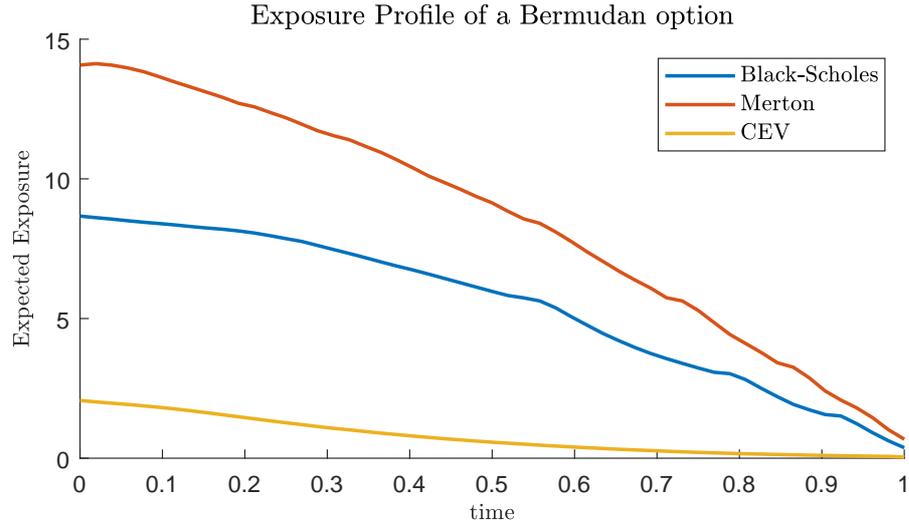} 
\caption{Expected exposure (EE) for a Bermudan option in the Black-Scholes model, the Merton model and the CEV model with maturity $T=1$ and $N_{sim}=50000$.}
  \label{fig:EE_Bermudan} 
\end{figure}

\begin{figure}[H] 
\centering
\includegraphics[width=\textwidth]{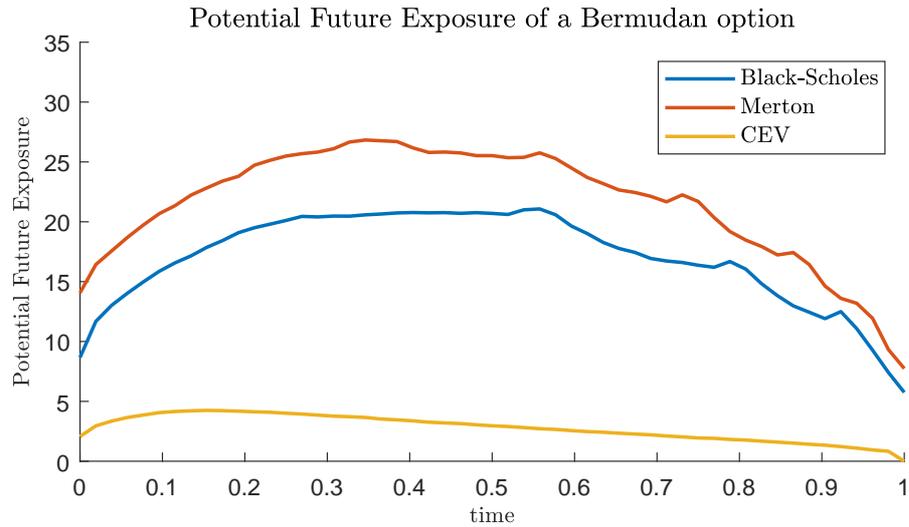} 
\caption{Potential future exposure (PFE) for a barrier option in the Black-Scholes model, the Merton model and the CEV model with maturity $T=1$ and $N_{sim}=50000$.}
  \label{fig:PFE_Bermudan} 
\end{figure}

\subsubsection{Runtimes}
The runtimes are shown in Figure \ref{fig:EE_Bermudan_runtime} as a function of the number of simulations and they are displayed in Table \ref{tab:EE_Bermudan_rumtime}. The runtime of the pre-computation step and of the time-stepping are slower as for the barrier option due to the higher number of Chebyshev points $N$. As a function of the number of simulations the runtimes for the Bermudan option and the runtimes for the barrier option behave similarly. The simulation of the risk factors is independent of the number of nodal points and thus the same in both cases. The results for Bermudan options are consistent with the promising results for barrier options from the previous section. This give rise to the hope that we obtain similar results also in further applications.

\begin{figure}[H]
\begin{minipage}{.5\linewidth}
\centering
\subfloat[]{\includegraphics[scale=.48]{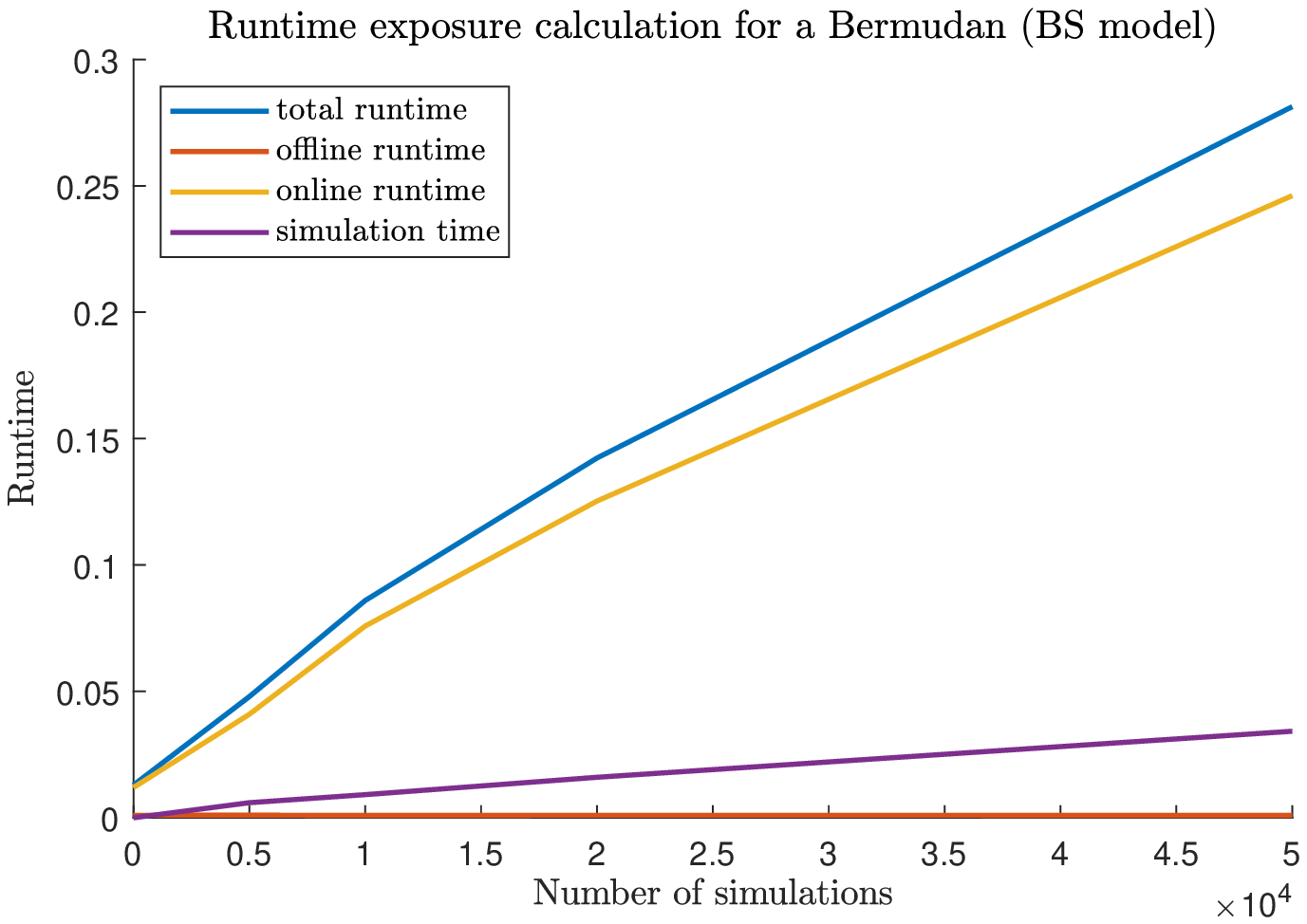}}
\end{minipage}%
\begin{minipage}{.5\linewidth}
\centering
\subfloat[]{\includegraphics[scale=.48]{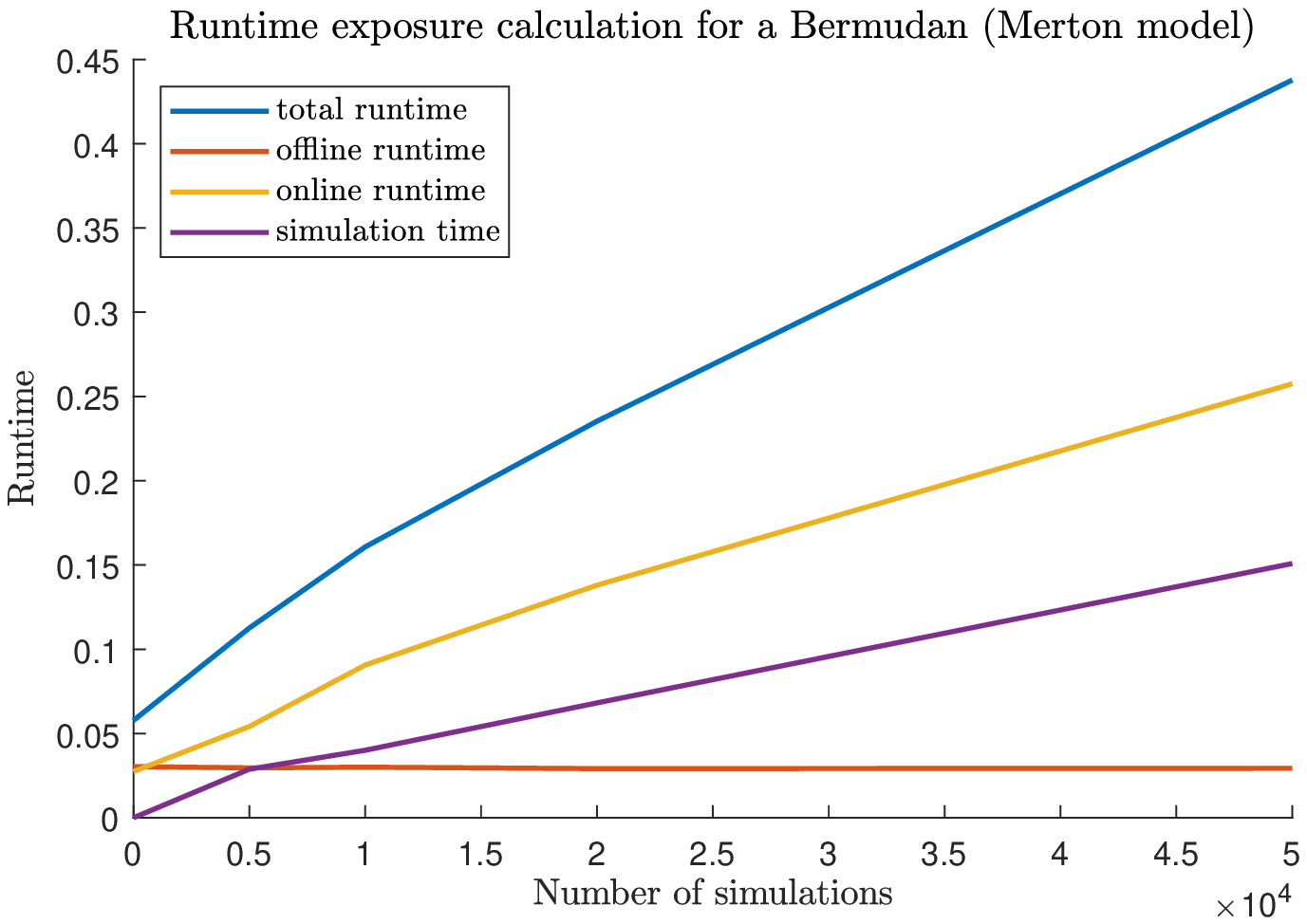}}
\end{minipage}\par\medskip
\centering
\subfloat[]{\includegraphics[scale=.48]{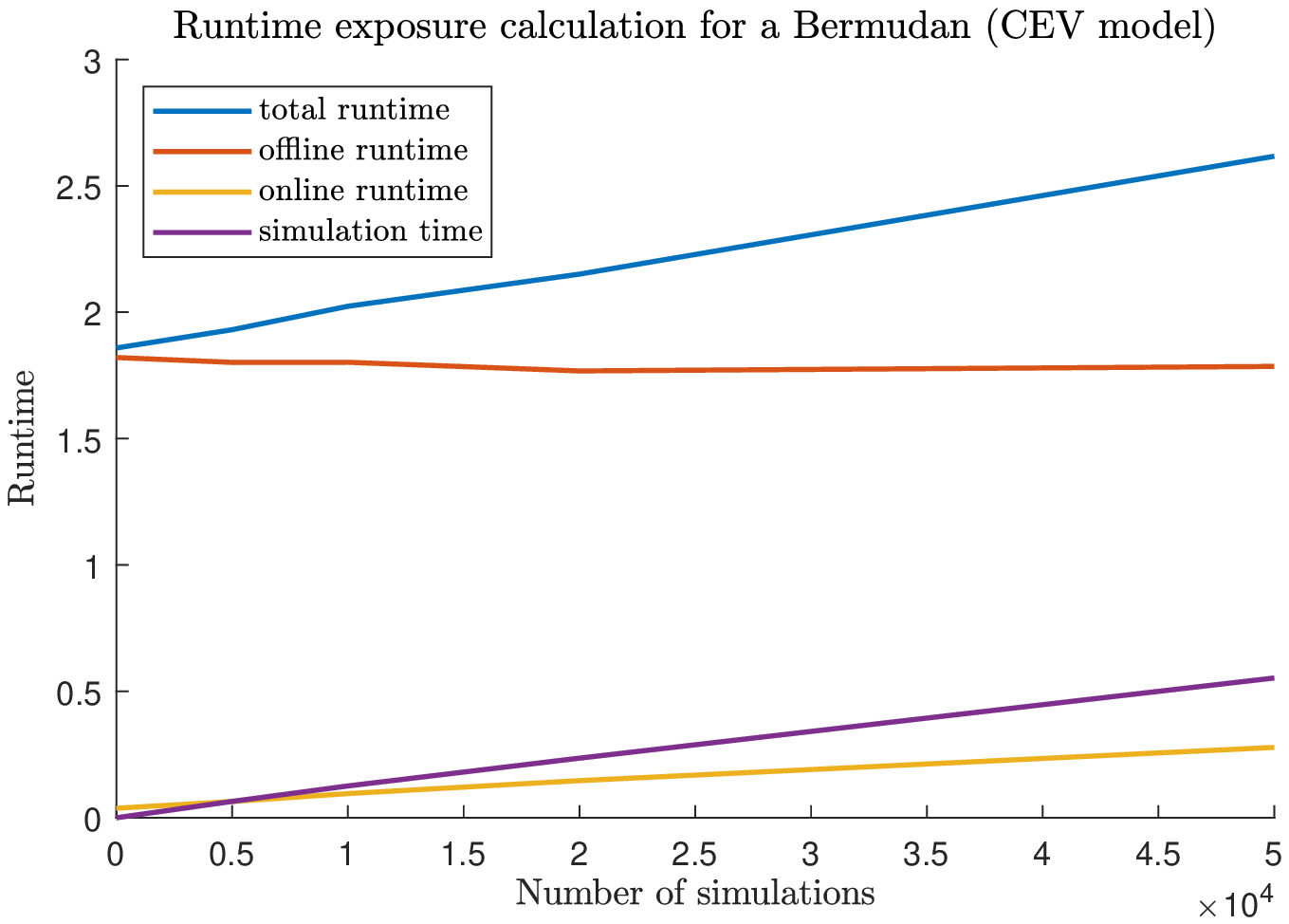}}
\caption{Runtime of the credit exposure calculation for a Bermudan option in the Black-Scholes model, the Merton model and the CEV model for different number of simulation paths. All three plots display the total runtime (blue), the runtime of the pre-computation step (red), the runtime of the time-stepping (yellow) and the runtime of the simulation step (purple).}
\label{fig:EE_Bermudan_runtime}
\end{figure}

\begin{table}[H]
\begin{center}
\begin{tabular}{llccccc}
\hline 
 &$\qquad N_{sim}$ & $0$ & $5000$ & $10000$ & $20000$ & $50000$ \\ 
\hline 
\multirow{4}{*}{BS}
& Simulation & 0s & 0.01s & 0.01s & 0.02s & 0.04s\\ 
& Pre-computation & $<$0.01s & $<$0.01s & $<$0.01s & $<$0.01s & $<$0.01s\\
& Time-stepping & 0.01s & 0.05s & 0.10s & 0.16s & 0.30s\\ 
& Total & 0.01s & 0.06s & 0.12s & 0.18s & 0.34s\\   
\hline 
\multirow{4}{*}{Merton}
& Simulation & 0s & 0.02s & 0.04s & 0.07s & 0.15s\\ 
& Pre-computation & 0.03s & 0.03s & 0.04s & 0.04s & 0.04s\\
& Time-stepping & 0.03s & 0.06s & 0.12s & 0.17s & 0.31s\\ 
& Total & 0.06s & 0.12s & 0.19s & 0.27s & 0.49s\\  
\hline
\multirow{4}{*}{CEV}
& Simulation & 0s & 0.07s & 0.14s & 0.24s & 0.58s\\ 
& Pre-computation & 1.78s & 1.79s & 1.82s & 1.76s & 1.73s\\
& Time-stepping & 0.04s & 0.07s & 0.13s & 0.17s & 0.31s\\ 
& Total & 1.82s & 1.93s & 2.08s & 2.17s & 2.62s\\   
\hline 
\end{tabular} 
\caption{Runtime of the credit exposure calculation for a Bermudan option in the Black-Scholes (BS) model, the Merton model and the CEV model for different number of simulation paths.}
\label{tab:EE_Bermudan_rumtime} 
\end{center}
\end{table}

\subsection{Summary of the experiments}
In this section, we analysed the Dynamic Chebyshev method for credit exposure calculation numerically.

\cite{GlauMahlstedtPoetz2019} have validated the method for the pricing of options in different asset models. The experiments of this section show that the method is moreover well suited for credit exposure calculation of Bermudan and barrier options. Our examples show that the method can be applied to different models which require different numerical techniques for the moment calculation. This model choice does not affect the time-stepping and thus the exposure calculation itself as confirmed in the experiments. 

Furthermore, the decomposition of the method into a pre-computation step and a time-stepping makes the method highly efficient. The time-stepping includes the pricing and the exposure calculation and depends only on the number of nodal points and the number of simulations. It is independent of the model choice and the numerical technique used in the pre-computation step. The investigation indicates additional efficiency benefits from the method when it is used to compute the exposure for different options on the same underlying.

The analysis of the exposure profiles reveals critical dependence on the model choice. The effect is most striking for the expected exposure of barrier options, where we observed increasing and decreasing profiles for the same option in different models.

\section{Empirical investigation of credit exposure profiles}
In this section, we investigate the exposure profiles of Bermudan and barrier options further and compare them with the one's of European options. Since the calculation of the exposure for complex derivatives with the traditional Monte Carlo method can be prohibitively time-consuming, practitioners sometimes approximate their dynamics with European options, whose profiles are simpler to obtain and well understood. We compare the exposure profiles of Bermudan and barrier options with the profiles of European options and the check the appropriateness of the approach. In the case of a Bermudan option we additionally investigate the effect of the number of exercise rights on the exposure profile.

\subsection{Barrier and European options}
In this section, we compare the exposure profile of barrier options with the exposure profiles of European options. Figure \ref{fig:EE_Barrier_EU} shows the comparison in the Black-Scholes model, in the Merton model and in the CEV model. We choose the same parameters as in the previous section. The expected exposure and the potential future exposure in all three models is smaller for the barrier option due to the knock-out feature. For both option types the expected exposure behaves relatively similar and moves more or less parallel. For the potential future exposure we observe different behaviours over time. For the European case the diffusion term results in an increasing exposure over time. This effect is less strong in the barrier case due to the risk of a knock-out. Closer to maturity the risk of reaching the barrier becomes smaller and the PFE increases faster than its European counterpart. We conclude that replacing barrier by European options yields an overestimation of the credit exposure. This is an indication for a very conservative practice. In the potential future exposure case the difference is substantial. Here, the dynamic Chebyshev method yields more precise result which could lead to lower capital requirements.

\begin{figure}[H]
\begin{minipage}{.5\linewidth}
\centering
\subfloat[]{\includegraphics[scale=.48]{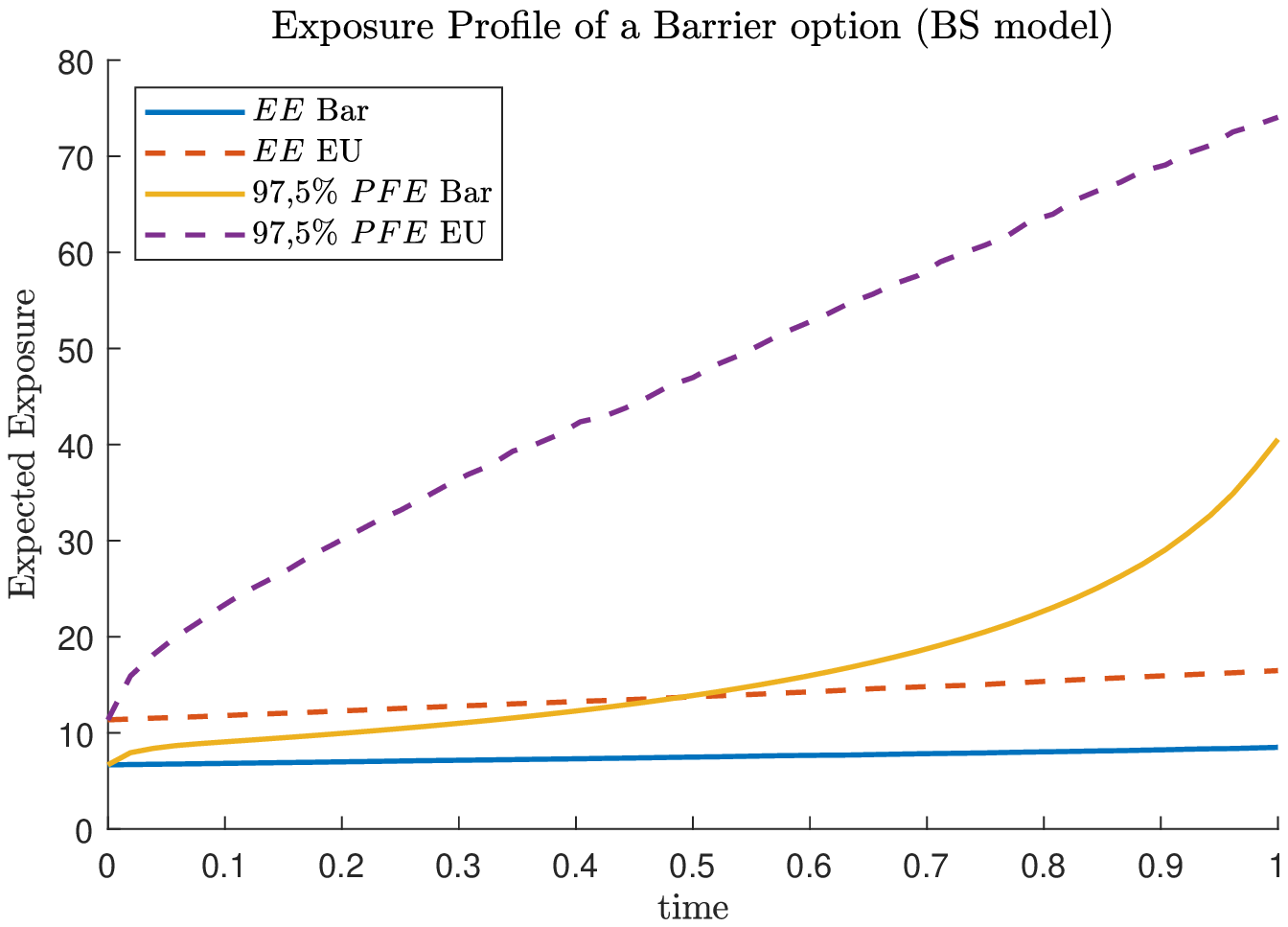}}
\end{minipage}%
\begin{minipage}{.5\linewidth}
\centering
\subfloat[]{\includegraphics[scale=.48]{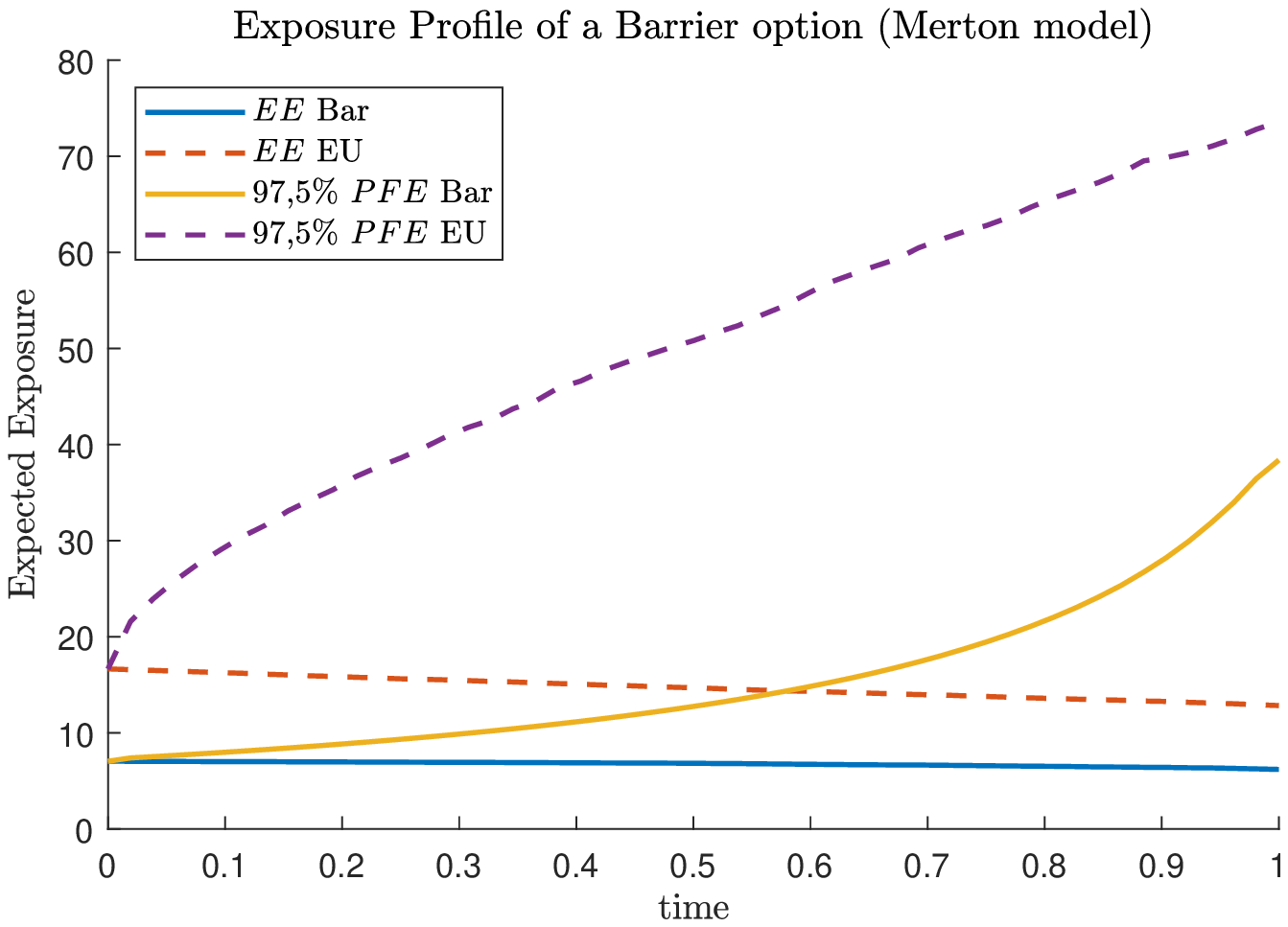}}
\end{minipage}\par\medskip
\centering
\subfloat[]{\includegraphics[scale=.48]{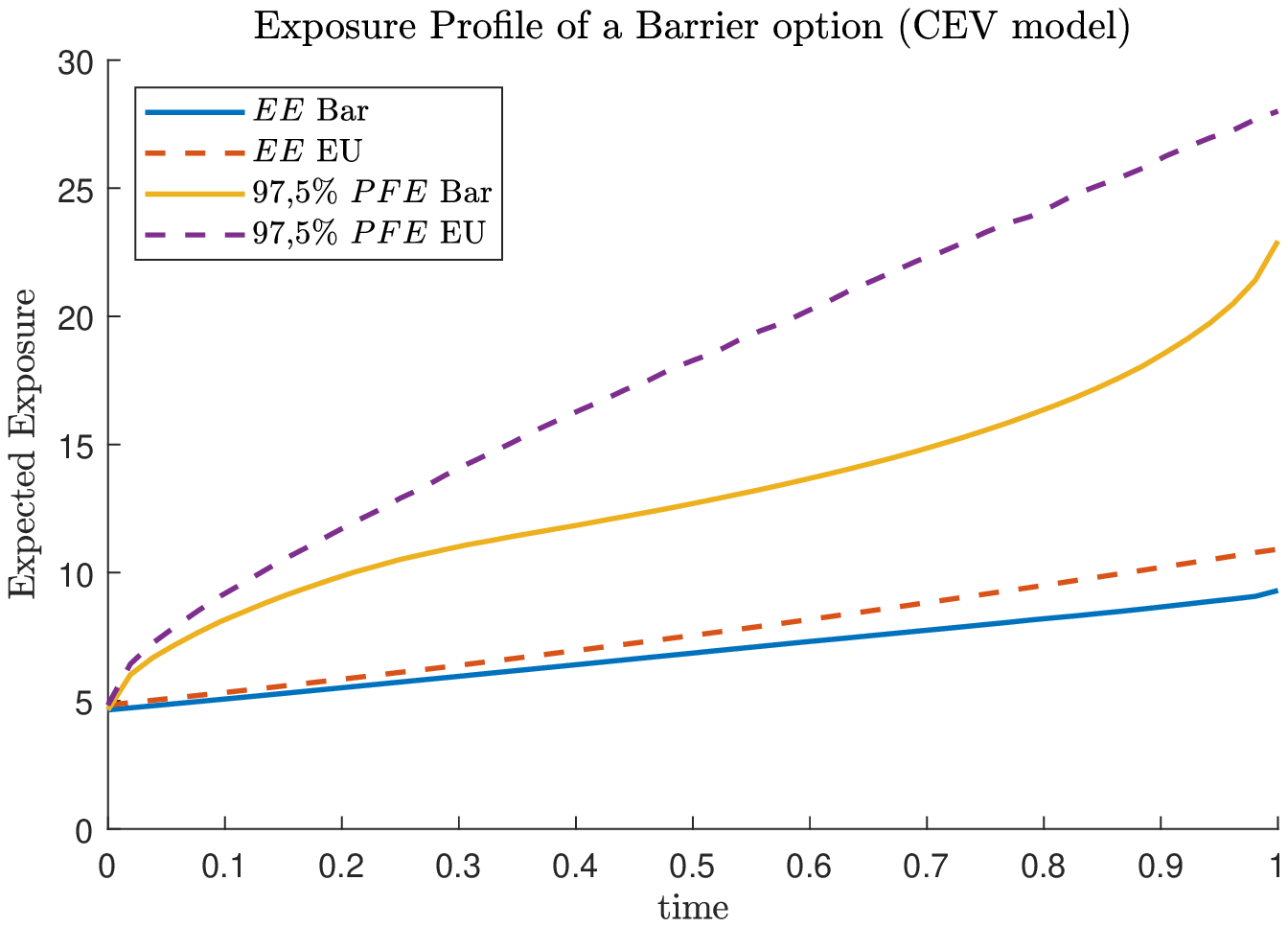}}
\caption{Expected exposure (EE) and potential future exposure (PFE) profiles for a barrier option and a European call option with maturity $T=1$ in the Black-Scholes model, the Merton model and the CEV model with $N_{sim}=50000$.}
\label{fig:EE_Barrier_EU}
\end{figure}

\subsection{Bermudan and European options}
In this section, we compare the exposure profile of Bermudan options with the one of European options. Figure \ref{fig:EE_Bermudan_EU} shows the exposure profiles in the Black-Scholes, Merton and CEV model. The corresponding option values are displayed in Table \ref{tab:EE_Bermudan_EU}. In the experiments we used the same parameter specifications as specified in the previous section.\\

We observe that the price at $t=0$ and thus the expected exposure at $t=0$ of the Bermudan option is higher than the value of the European option in all three models. This is in line with the theory since the possibility to exercise early is an additional feature of Bermudan options compared to European options. With this in mind the difference between the prices can be seen as the added value of the early exercise possibility. Over the option's lifetime however, we observe that the expected exposure of a Bermudan option decreases faster than its European counterpart. The reason for this behaviour is that if a Bermudan option is exercised early the exposure vanishes subsequently. Over its lifetime, the option is exercised for an increasing number of simulation paths and thus the exposure decreases. In contrast, a European option lacks this early exercise feature and has therefore a higher exposure at maturity. In our experiments this difference is about one order of magnitude. For the potential future exposure we observe the same effects. In absolute terms the effect is stronger as for the expected exposure, in relative terms it is slightly lower.

We conclude that replacing a Bermudan by a European option in the exposure calculation has two different effects. First, for most of the option's lifetime it highly overestimates both the option's expected exposure and potential future exposure and seems therefore to be too conservative. Secondly, close to $t=0$ the exposure of the European option underestimates the Bermudan option's exposure slightly. On a netting set level one can therefore not predict with certainty the effect of this simplification. Scenarios are possible where the replacement of a Bermudan by a European option in the exposure calculation leads to a lower CVA. Hence, one cannot conclude that the simplification is a conservative practice.

\begin{figure}[H]
\begin{minipage}{.5\linewidth}
\centering
\subfloat[]{\includegraphics[scale=.48]{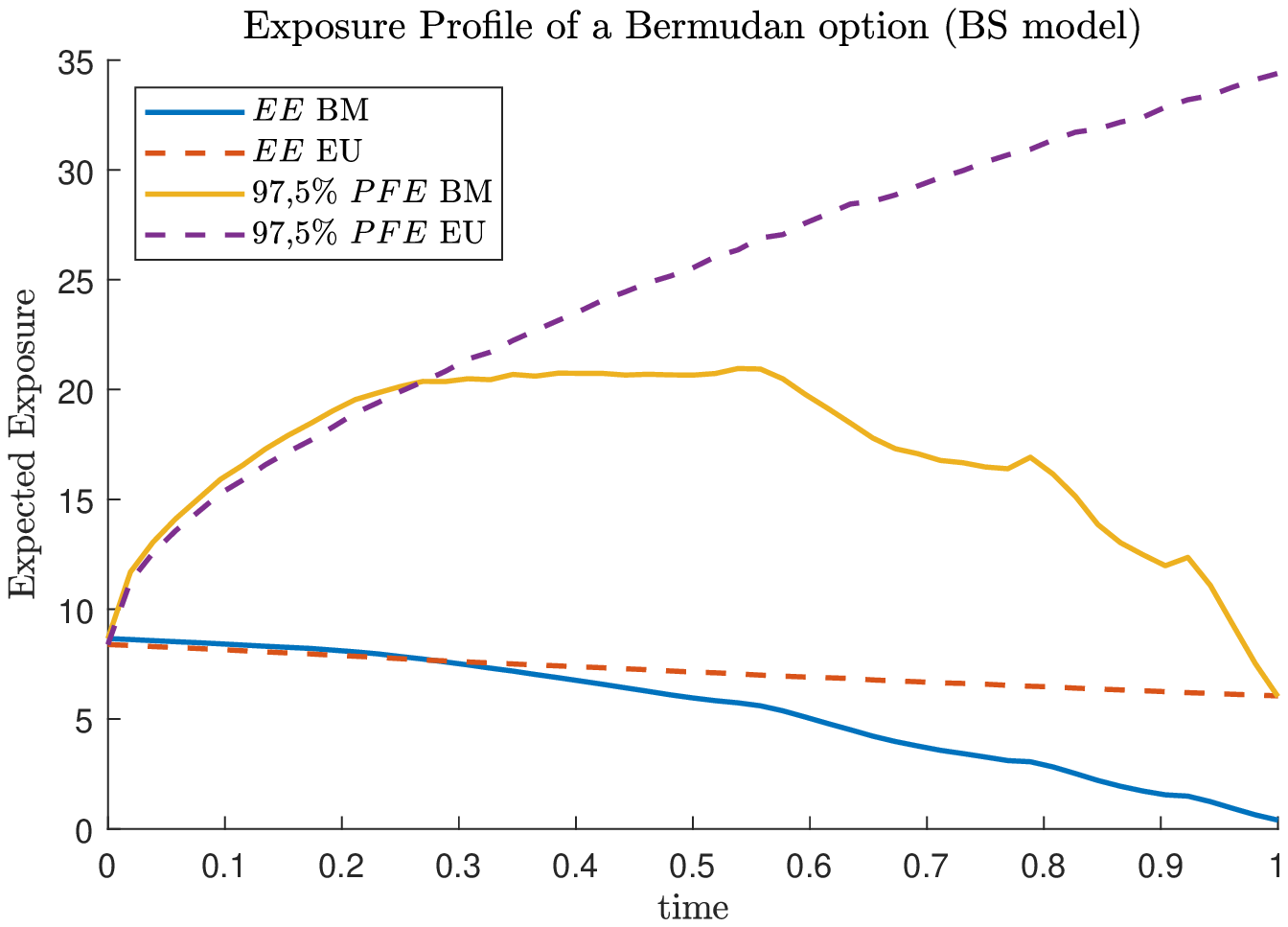}}
\end{minipage}%
\begin{minipage}{.5\linewidth}
\centering
\subfloat[]{\includegraphics[scale=.48]{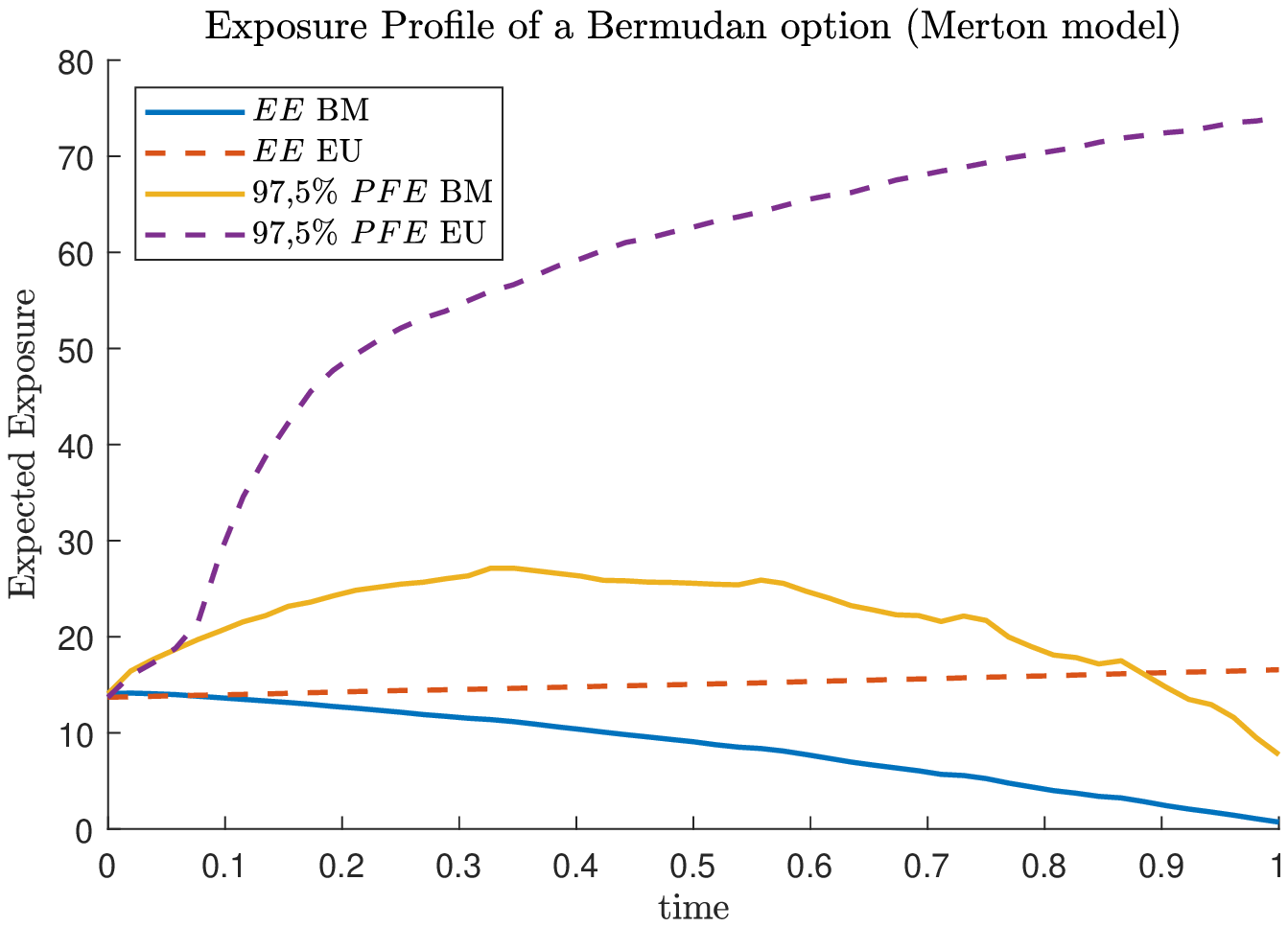}}
\end{minipage}\par\medskip
\centering
\subfloat[]{\includegraphics[scale=.48]{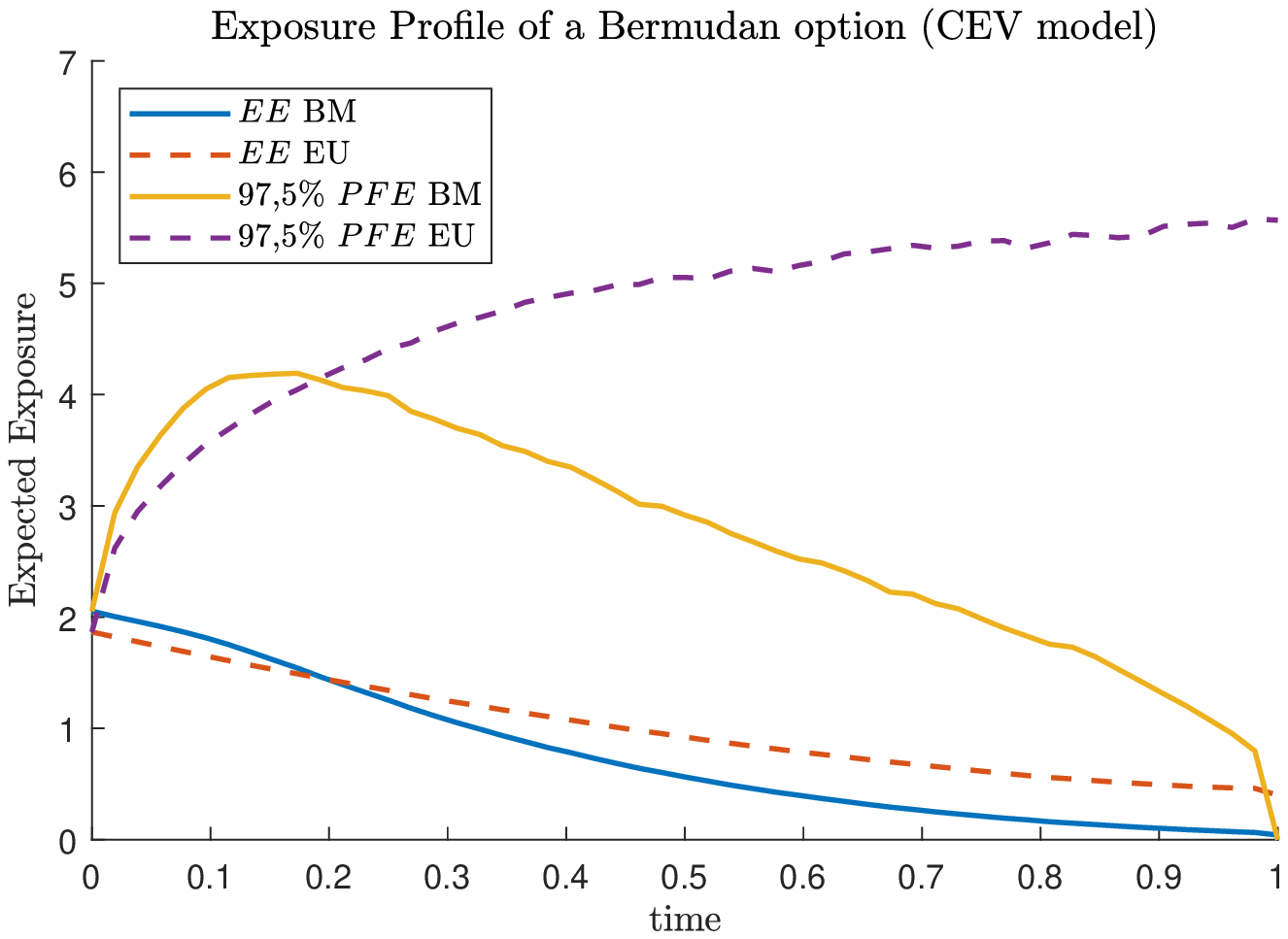}}
\caption{Expected exposure (EE) and potential future exposure (PFE) profiles for a Bermudan option and a European put option with maturity $T=1$ in the Black-Scholes model, the Merton model and the CEV model with $N_{sim}=50000$.}
\label{fig:EE_Bermudan_EU}
\end{figure}

\begin{table}[H]
\begin{center}
\begin{tabular}{llccc}
\hline 
 & & Price & EE at T & PFE at T\\ 
\hline 
\multirow{2}{*}{BS}
& European & 8.39 & 6.02 & 34.25\\ 
& Bermudan & 8.66 & 0.72 & 9.4\\
\hline 
\multirow{2}{*}{Merton}
& European & 13.69 & 16.34 & 74.07\\ 
& Bermudan & 14.07 & 1.16 & 12.62\\
\hline
\multirow{2}{*}{CEV}
& European & 2.43 & 0.75 & 8.66\\ 
& Bermudan & 2.72 & 0.07 & 0.79\\
\hline 
\end{tabular} 
\caption{Option price, expected exposure and potential future exposure of a European put and a Bermudan put option in the Black-Scholes (BS) model, the Merton model and the CEV model.}
\label{tab:EE_Bermudan_EU} 
\end{center}
\end{table}

In order to get a better understanding of the difference between European and Bermudan options we vary the number of exercise rights per year. We perform the same experiments as in the last section, however this time we consider five different Bermudan options. Figure \ref{fig:EE_BS_Bermudan_timesteps} shows the resulting exposure profile and Figure \ref{fig:PFE_BS_Bermudan_timesteps} the corresponding potential future exposure. Both are calculated on a daily basis, i.e.\;on $252$ (trading) days. We observe the exposure of a Bermudan option drops on the exercise days and between the exercise days behaves similar to a European option. The drops are smaller on the short end and become larger close to maturity. The potential future exposure shows a very similar behaviour. More exercise rights yield a smoother exposure profile for Bermudan options. Furthermore, we observe that the difference in the exposure profiles between a European option and a Bermudan option with $4$ exercise dates only is already substantial. On the other hand, the profiles for a Bermudan option with $36$ and one with $252$ exercise rights are relatively similar. The effect of adding additional exercise rights on the exposure seems to decrease with a higher number of exercise rights. 

We conclude that replacing Bermudan options by European options leads to significantly different exposure profiles. However, when it comes to efficiency, one could replace a Bermudan option with a high exercise frequency (or an American option which can always be exercised) with a Bermudan option with only a few exercise rights.
\begin{figure}[H] 
\centering
\includegraphics[width=\textwidth]{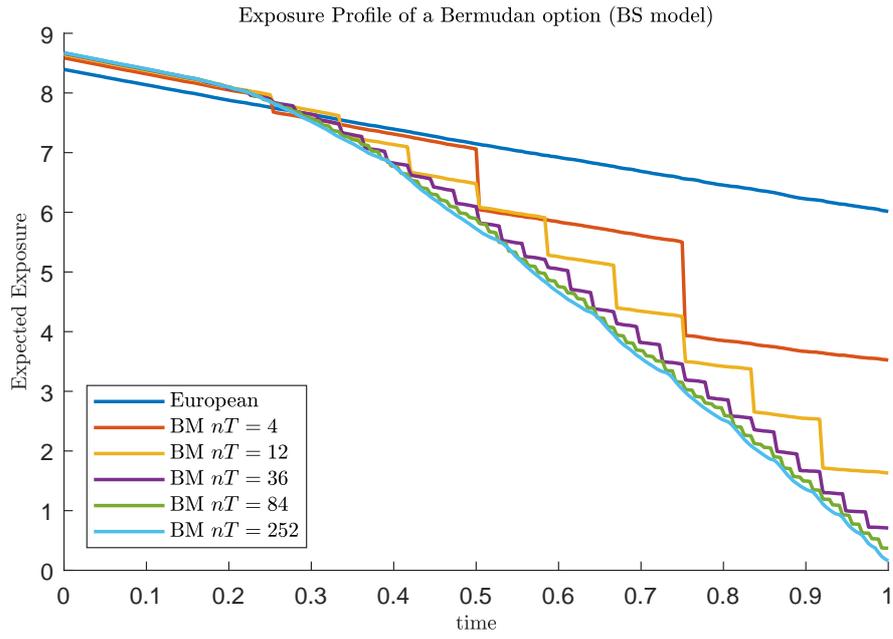} 
\caption{Expected exposure (EE) for a Bermudan option with different exercise frequencies and a European option in the Black-Scholes model with $N_{sim}=50000$.}
  \label{fig:EE_BS_Bermudan_timesteps} 
\end{figure}

\begin{figure}[H] 
\centering
\includegraphics[width=\textwidth]{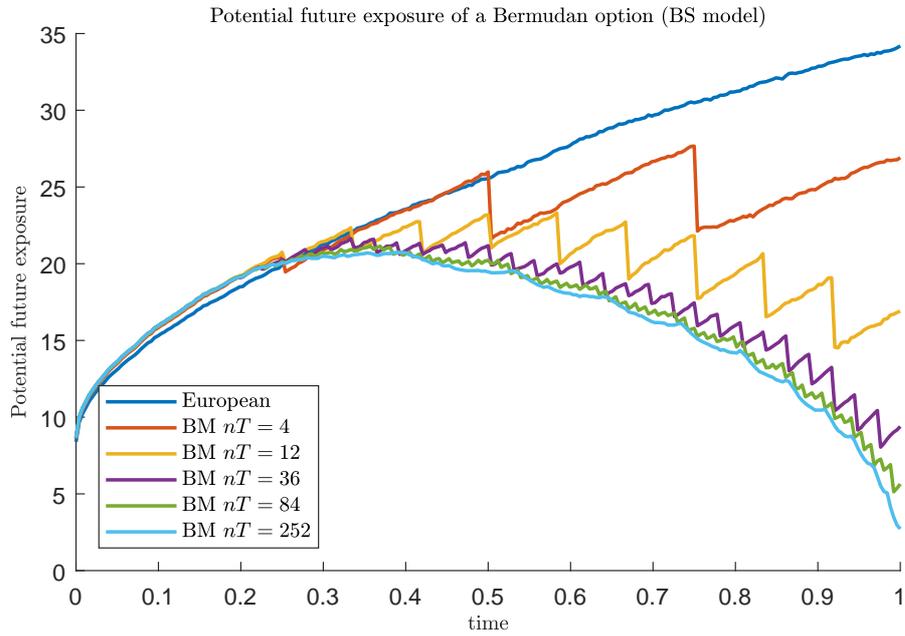} 
\caption{Potential future exposure (PFE) for a Bermudan option with different exercise frequencies and a European option in the Black-Scholes model with $N_{sim}=50000$.}
  \label{fig:PFE_BS_Bermudan_timesteps} 
\end{figure}

\begin{table}
\begin{center}
\begin{tabular}{lccc}
\hline 
& $V_{0}$ & EE at T & PFE at T \\ 
\hline 
European & 8.39 & 6.01 & 34.20 \\ 
BM $nT=4$ & 8,59 & 3.52 & 26.91 \\
BM $nT=12$ & 8.64 & 1.62 & 16.91 \\ 
BM $nT=36$ & 8.66 & 0.71 & 9.39 \\ 
BM $nT=84$ & 8.67 & 0.37 & 5.65 \\ 
BM $nT=252$ & 8.67 & 0.16 & 2.72 \\ 
\hline 
\end{tabular}
\caption{Option price, expected exposure and potential future exposure of a European and Bermudan options with different exercise frequencies in the Black-Scholes model.}
\label{tab:EE_BS_Bermudan_timesteps} 
\end{center}
\end{table}

\section{Conclusion and Outlook}
In this paper we have introduced a unified framework for the pricing and exposure calculation of European, Bermudan and barrier options based on the dynamic Chebyshev method of \cite{GlauMahlstedtPoetz2019}. The numerical experiments in Section 5 and Section 6 showed that the method is well-suited for the exposure calculation and the structure of the approach yields high efficiency. The Dynamic Chebyshev method admits several qualitative advantages.
\begin{itemize}
\item The method offers a high flexibility, the price and the credit exposure of many different products can be calculated with this algorithm and it can be used in different stock price models.
\item The structure of the method allows us to explore additional knowledge of the model by choosing different techniques to compute the conditional moments in the pre-computation. This increases the efficiency of the method compared to standard approaches such as Least Square Monte Carlo.
\item The calculated credit exposure can be aggregated on different levels and enables the efficient computation of CVA and other risk metrics on a portfolio level.
\item The algorithm is presented for barrier and Bermudan options in three different equity models. However, the approach is more general in term of models and products. One can exploit the method in interest rate, FX or commodity markets and price options such as Bermudan swaptions, callable bonds for instance.
\item In this paper, the payoff $g$ is of a standard call or put type. However, the method can easily handle more complicated payoffs as well. An example is a callable bond, where the underlying risk factor is the interest rate and the payoff is a call option on the (vanilla) bond.
\item The polynomial structure of the approximation of the value function enables an efficient computation of the option's sensitivities Delta and Gamma in every time step.
\end{itemize}
Compared to the Least Square Monte Carlo approach for CVA from \cite{Schoeftner2008} the proposed approach has quantitative and qualitative advantageous. First, \cite{GlauMahlstedtPoetz2019} show that the offline-online decomposition of the Dynamic Chebyshev method leads to an efficiency gain compared to the Least Square Monte Carlo approach of \cite{LongstaffSchwartz2001}. This is especially the case, when several options on the same underlying are priced. Moreover, the Monte Carlo approach of \cite{Schoeftner2008} requires a measure change from the pricing measure to the real-world measure. In contrast, the dynamic Chebyshev method for exposure calculation separates the pricing from the exposure calculation. Therefore one can increase the accuracy of the option price without changing the number of simulations for the exposure calculation. 

An approach which has a similar structure as the new method is presented in \cite{ShenWeideAnderluh2013}. Their approach is based on the COS method and requires the existence of the characteristic function in closed form. This means, in contrast to the Dynamic Chebyshev method, it can only be applied to a smaller class of asset models.

Furthermore, the empirical investigation of the exposure profile provides insight into the behaviour of the profiles for different option types and asset models with practical implications. In order to speed-up the exposure calculation common simplifications in practice include the choice of a simple model for the underlying risk factor and the replacement of complex options by simpler options. Our experiments reveal that the first simplification strongly affects the results for barrier option. Moreover, the experiments show that the replacement of Bermudan options by European options yields significantly different exposure profiles and two different problems occur. First, the exposure of the European option overestimates the exposure of the Bermudan for most of the option's lifetime and second we cannot conclude that this simplification is conservative. Therefore, we recommend to compute the exposure for Bermudan options directly. The presented dynamic Chebyshev method is able to do so in an efficient way.

The dynamic Chebyshev method is presented in \cite{GlauMahlstedtPoetz2019} as a general algorithm in $d$ dimensions. In this paper we focussed on the exposure calculation for products which depend only on one main risk factor. As a next step we extend the presented approach for the exposure calculation to options which have more than one main risk factor.

\appendix
\section{Proof of Proposition 4.2}
\begin{proof}
We define $\mu_{j}:=\EE[T_{j}(Y)\1_{[-1,1]}(Y)]$ as the generalized moments and $\mu^{\prime}_{j}=\EE[T^{\prime}_{j}(Y)\1_{[-1,1]}]$ as the expectations of the derivatives of the Chebyshev polynomials. The first three Chebyshev polynomials are given by $T_{0}(x)=1$, $T_{1}(x)=x$ and $T_{2}(x)=2x^{2}-1$ with derivatives $T_{0}^{\prime}(x)=0$, $T_{1}^{\prime}(x)=1=T_{0}(x)$ and $T_{2}^{\prime}(x)=4x=4T_{1}(x)$. This yields
\begin{align*}
\mu_{0}=\EE[1_{[-1,1]}(Y)]&=P(-1\leq Y\leq 1)=F(1)-F(-1).
\end{align*}
Before we consider the first moment we need the following property of the density $f$ of the normal distribution,
\begin{align*}
f^{\prime}(x)&=\frac{1}{\sqrt{2\pi}\sigma}e^{-\frac{(x-mu)^{2}}{2\sigma^{2}}}(-2\frac{(x-\mu)}{2\sigma^{2}})=f(x)(-2\frac{(x-\mu)}{2\sigma^{2}})=(-\frac{1}{\sigma^{2}})xf(x)+\frac{\mu}{\sigma^{2}}f(x),\\
&\text{and hence}\quad xf(x)=\mu f(x) - \sigma^{2}f^{\prime}(x).
\end{align*}
Using this property we obtain for the first moment $\mu_{1}=\EE[Y1_{[-1,1]}(Y)]$
\begin{align*}
\mu_{1}=\int_{-1}^{1}yf(y)\dy=\mu\int_{-1}^{1}f(y)\dy-\sigma^{2}\int_{-1}^{1}f^{\prime}(y)\dy=\mu\mu_{0}-\sigma^{2}(f(1)-f(-1)).
\end{align*}
Assume we know $\mu_{j},\mu^{\prime}_{j}$, $j=0,\ldots,n$. The Chebyshev polynomials and their derivative are recursively given by
\begin{align*}
T_{n+1}(x)=2xT_{n}(x)-T_{n-1}(x)\qquad T^{\prime}_{n+1}(x)=2(n+1)T_{n}(x)+\frac{n+1}{n-1}T^{\prime}_{n-1}(x).
\end{align*}
From the latter easily follows that
\begin{align*}
\mu^{\prime}_{n+1}&=\EE[T^{\prime}_{n+1}(Y)\1_{[-1,1]}(Y)]\\
&=2(n+1)\EE[T_{n}(Y)\1_{[-1,1]}(Y)]+\frac{n+1}{n-1}\EE[T^{\prime}_{n-1}(Y)\1_{[-1,1]}(Y)]\\
&=2(n+1)\mu_{n}+\frac{(n+1)}{(n-1)}\mu^{\prime}_{n-1}
\end{align*}
for $n\geq 2$. For the generalized moments we obtain
\begin{align*}
\mu_{n+1}=\EE[T_{n+1}(Y)\1_{[-1,1]}(Y)]=2\EE[Y T_{n}(Y)\1_{[-1,1]}(Y)]-\EE[T_{n-1}(Y)\1_{[-1,1]}(Y)].
\end{align*}
The second term is simply $\mu_{n-1}$ and for the first term we obtain
\begin{align*}
\EE[Y T_{n}\1_{[-1,1]}(Y)]&=\int_{-1}^{1}yT_{n}(y)f(y)\dy\\
&=\mu\int_{-1}^{1}T_{n}(y)f(y)\dy - \sigma^{2}\int_{-1}^{1}T_{n}(y)f^{\prime}(y)\dy\\
&=\mu\mu_{n}-\sigma^{2}\big(T_{n}(1)f(1)-T_{n}(-1)f(-1)-\mu^{\prime}_{n-1}\big).
\end{align*}
Altogether we obtain
\begin{align*}
\mu_{n+1}&=2\EE[Y T_{n}(Y)\1_{[-1,1]}(Y)]-\EE[T_{n-1}(Y)\1_{[-1,1]}(Y)]\\
&=2\big(\mu\mu_{n}-\sigma^{2}\big(T_{n}(1)f(1)-T_{n}(-1)f(-1)-\mu^{\prime}_{n-1}\big)\big)-\mu_{n-1}.
\end{align*}
It remains to find an expression for $\mu^{\prime}_{n}$.\\

We will prove by induction that
\begin{align}\label{eq:Cheby_deriv_recursive}
\mu^{\prime}_{n+1}=2(n+1)\sum_{j=0}^{n}{}^{'}\mu_{j}\1_{(n+j)\bmod 2=0}, \quad n\geq 0
\end{align}
where $\sum{}^{'}$ indicates that the first term is multiplied with $1/2$. For $n=0$, we obtain
\begin{align*}
\mu^{\prime}_{1}=2\sum_{j=0}^{0}{}^{'}\mu_{j}\1_{(0+j)\bmod 2=0}=2\frac{1}{2}\mu_{0}=\1_{0\bmod 2=0}=\mu_{0}.
\end{align*}
which shows \eqref{eq:Cheby_deriv_recursive}. Assume \eqref{eq:Cheby_deriv_recursive} holds for $j=0,\ldots,n$. Then we obtain
\begin{align*}
\mu^{\prime}_{n+1}&=2(n+1)\mu_{n}+\frac{(n+1)}{(n-1)}\mu^{\prime}_{n-1}\\
&=2(n+1)\mu_{n}+\frac{(n+1)}{(n-1)}2(n-1)\sum_{j=0}^{n-2}{}^{'}\mu_{j}\1_{(n-2+j)\bmod 2=0}\\
&=2(n+1)\Big(\mu_{n}\1_{(n+n)\bmod 2=0}+\mu_{n-1}\1_{(n+n-1)\bmod 2=0}+\sum_{j=0}^{n-2}{}^{'}\mu_{j}\1_{(n+j)\bmod 2=0}\Big)\\
&=2(n+1)\sum_{j=0}^{n}{}^{'}\mu_{j}\1_{(n+j)\bmod 2=0}.
\end{align*}
We use that $(n+j)\bmod 2=(2+j-2)\bmod 2$. For the generalized moments we thus obtain
\begin{align*}
\mu_{n+1}=2\mu\mu_{n} - 2\sigma^{2}\big(f(1)-f(-1)T_{n}(-1)-2(n-1)\sum_{j=0}^{n-2}{}^{'}\mu_{j}\1_{(n+j)\bmod 2=0}\big)-\mu_{n-1}
\end{align*}
which was our claim.
\end{proof}
\bibliographystyle{chicago}
  \bibliography{CVA_Literature}

\end{document}